\newcommand\BibTeX{{\rmfamily B\kern-.05em \textsc{i\kern-.025em b}\kern-.08em
T\kern-.1667em\lower.7ex\hbox{E}\kern-.125emX}}
\newtheorem{theorem}{Theorem}[section]
\newtheorem{proposition}[theorem]{Proposition}
\theoremstyle{definition}
\newtheorem{definition}[theorem]{Definition}
\theoremstyle{remark}
\newtheorem{remark}[theorem]{Remark}
\begin{document}

\runninghead{Information Entropy of the Financial Market}

\title{Information Entropy of the Financial Market: Modelling Random Processes Using Open Quantum Systems}

\author{Will Hicks\affilnum{1}}

\affiliation{\affilnum{1} Memorial University of Newfoundland}

\corrauth{Will Hicks}

\email{whicks7940@googlemail.com}

\begin{abstract}
We discuss the role of information entropy on the behaviour of random processes, and how this might take effect in the dynamics of financial market prices. We then go on to show how the Open Quantum Systems approach can be used as a more flexible alternative to classical methods in terms of modelling the entropy gain of a random process. We start by describing an open quantum system that can be used to model the state of a financial market. We then go on to show how to represent an essentially classical diffusion in this framework. Finally, we show how by relaxing certain assumptions, one can generate interesting and essentially non-classical results, which are highlighted through numerical simulations.
\end{abstract}
\keywords{Quantum Finance, Open Quantum Systems, Von-Neumann Entropy}
\maketitle
\section{Introduction}
Despite the fact that probabilistic methods are widely applied in finance, one can argue that changes in market prices are not random. Real factors (low profits impacting an equity price, poor economic performance impacting FX rates etc) lie behind price changes. The reason why we turn to the study of random variables, and the framework of probability, is to do with our lack of {\em information} regarding what the future price will be. For this reason the consideration of the information entropy is an important factor.

The open quantum systems approach outlined in this article presents a means by which one can study the impact of information entropy on the time evolution of a system using the framework of quantum probability. Furthermore, we show how the seemingly random evolution of a traded price can arise from simple interactions between the market and its' environment.

We start in section \ref{entropy} by discussing the ways in which quantum probability models differ from models based on commutative assumptions, and classical probability. In section \ref{GenSetup} we propose a Hilbert space representation of the market environment, that allows us to maximise the benefits of the open quantum systems approach. We then derive the general form for the master equation in section \ref{LME}. In particular we show how one can apply ladder operators, such as those discussed in \cite{Bagarello}, in deriving the master equation in an open systems framework. The Markovian approximation is given in section \ref{Markov}. Then in section \ref{Gaussian}, we go on to derive a Gaussian process whereby the state diffuses in an essentially classical fashion. That is, a finite dimensional quantum state that is diagonalized relative to the traded price operator remains diagonalized into the future.

Finally in section \ref{NG}, we show how the methodology can easily be extended to cover non-classical domains, where we show that statistical properties such as variance and kurtosis are linked to the degree of information entropy as we look further into the future.

In this article, the focus is primarily on outlining the general methodology, and discussing the financial interpretation of the general steps we are taking. We defer the task of deriving specific models that can be directly applied, to a future study. It should also be noted that whilst the general master equation we derive in section \ref{LME} can equally apply where we use an infinite dimensional Hilbert space for the financial market (eg $\mathcal{H}_{mkt}=L^2(\mathbb{R})$), we focus in this article on representing the market in finite dimensions ($\mathcal{H}_{mkt}=\mathbb{C}^N$).
\section{Entropy of the Financial Market:}\label{entropy}
\subsection{Entropy in the Classical Case:}
For discrete probabilities $p_i:i=1\dots N$, the Shannon entropy is given by (see for example \cite{NC} chapter 11):
\begin{align}\label{shannon}
H(\{p_i\},i=1\dots N)&=-\sum_{i=1}^Np_i\log(p_i)
\end{align}
If $x_i$ labels the outcome that occurs with probability $p_i$, then equation \ref{shannon} can be interpreted as a measure of the information that would be gained if we were to find out the future outcome for certain, rather than simply knowing the probability. So clearly:
\begin{itemize}
\item If we know we will get the outcome $x_k$ with probability $p_k=1$, then the entropy is zero.
\item If we have no information whatsoever, then the probability for each outcome is the same: $p_i=1/N$, for $i=1\dots N$. In this case the entropy is maximized:
\begin{align*}
H(\{p_i\},i=1\dots N)&=\log(N)
\end{align*}
\end{itemize}
Now, assume that we have a classical discrete (eg integer valued) random walk, where the probability distribution for each step is independent and identically distributed. If we label the probability distribution for the position after $n$ steps as: $P_n$, then it follows from \cite{ABBN} Theorem 1 that:
\begin{align}\label{ent_mono}
H(P_{n+1})>H(P_n)
\end{align}
In other words, after each step, the entropy monotonically increases.

When one thinks about random walks, used to represent financial variables, one tends to think about concepts such as the variance, rather than entropy. For example, we may wish to track the variance for a discrete observable acting on $\mathbb{C}^N$ after $n$ steps of a random walk.

If we now label the probability of finding the outcome $x_i$, after $n$ steps of the random walk, as $P_n^i$, then we have:
\begin{align*}
Var(X,P_n)&=E^{P_n}[X^2]-(E^{P_n}[X]^2)\\
E^{P_n}[f(X)]&=\sum_{i=1}^Nf(x_i)P_n^i
\end{align*}
Assuming that the classical random walk makes independent and identically distributed steps, in addition to equation \ref{ent_mono}, we have:
\begin{align}\label{var_mono}
Var(X,P_{n+1})>Var(X,P_n)
\end{align}
When analysing financial market time-series, and the prices of listed option contracts, one tends to consider equation \ref{var_mono}, rather than equation \ref{ent_mono}. Furthermore, once one has decided on the probability distribution we wish to use for future financial market returns, the entropy is fixed. In the quantum case, which we describe in section \ref{ent_quant}, this simple relationship does not apply. One can have 2 different market states, that have the same probability distribution, but different levels of entropy.

In section \ref{ent_ex}, we describe a highly simplified example of a financial situation which describes just such a phenomenon. That is where the 2 different financial markets have the same probability distribution for a traded market price, but differing amounts of the Von-Neumann entropy, corresponding to differing amounts of financial information.

Then in section \ref{NG} we show examples of where the amount of information we have regarding the market state, as measured by the Von-Neumann entropy, impacts its' evolution into the future.
\subsection{Entropy in the Quantum Case:}\label{ent_quant}
In the more general case, the state of the market is described by a quantum state acting on a Hilbert space. For example, if we assume there are $N$ different possible outcomes: $x_i$, $i=1\dots N$, with probabilities: $p_i$, then we would set the Hilbert space as:
\begin{align*}
\mathcal{H}_{mkt}&=\mathbb{C}^N
\end{align*}
The possible outcomes are encoded in an operator acting on the Hilbert space:
\begin{align}\label{X_comm}
X&=\sum_{i=1}^Nx_i|e_i\rangle\langle e_i|
\end{align}
Then the market, and associated probabilities, are determined by a density matrix: $\rho$ acting on $\mathbb{C}^N$, so that we have:
\begin{align*}
\rho&=\sum_{i,j=1}^N\rho_{ij}|e_i\rangle\langle e_j|\text{, }\sum_{i=1}^N\rho_{ii}=1\\
p_i&=Tr[\rho\mathcal{P}_i]\text{, }\mathcal{P}_i=|e_i\rangle\langle e_i|\\
&=\rho_{ii}\\
E^{\rho}[X]&=Tr[X\rho]\\
&=\sum_{i=1}^Np_ix_i
\end{align*}
In the quantum case, the Von Neumann entropy is given by:
\begin{align}\label{VN ent}
H(\rho)&=-Tr[\rho\log\rho]
\end{align}
Note that in the event that we have: $\rho=\sum_{i=1}^Np_i|e_i\rangle\langle e_i|$, then the Von-Neumann entropy (\ref{VN ent}) and the Shannon entropy (\ref{shannon}) coincide. For this reason, we describe this as a {\em classical} state, and use the notation:
\begin{align*}
\rho_{classical}&=\sum_{i=1}^Np_i|e_i\rangle\langle e_i|
\end{align*}
To understand the importance of the quantum approach, first consider the following:
\begin{proposition}\label{entropy_proposition}
Let the operator $X$ be given by equation  \ref{X_comm}, and define the set of projection operators:
\begin{align}\label{mini_def_f(X)_entropy}
\mathcal{P}_i&=|e_i\rangle\langle e_i|
\end{align}
Finally, consider the set of density matrices: $\mathcal{A}$ for which we have, for $\rho\in\mathcal{A}$:
\begin{align*}
E[\mathcal{P}_i]&=Tr[\rho\mathcal{P}_i]\\
&=p_i
\end{align*}
In other words $\mathcal{A}$ is the set of density matrices which fixes the probability of finding the price $x_i$, for each $i=1$ to $N$. Then the classical density matrix:
\begin{align*}
\rho_{classical}&=\sum_{i=1}^Np_i|e_i\rangle\langle e_i|
\end{align*}
Maximises the Von-Neumann entropy within $\mathcal{A}$.
\end{proposition}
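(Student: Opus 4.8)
The plan is to reduce the claim to the non-negativity of the quantum relative entropy (Klein's inequality). Recall that for two density matrices $\rho$ and $\sigma$ the quantity $S(\rho\|\sigma)=Tr[\rho\log\rho]-Tr[\rho\log\sigma]$ satisfies $S(\rho\|\sigma)\geq 0$, with equality if and only if $\rho=\sigma$; this is standard (see \cite{NC} chapter 11). I would take $\sigma=\rho_{classical}$ as the reference state and show that for every $\rho\in\mathcal{A}$ the cross term $Tr[\rho\log\rho_{classical}]$ collapses to a constant that does not actually depend on the particular choice of $\rho$.

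First I would compute $\log\rho_{classical}$. Since $\rho_{classical}=\sum_{i=1}^N p_i|e_i\rangle\langle e_i|$ is already in spectral form with respect to the orthonormal basis $\{|e_i\rangle\}$, we have $\log\rho_{classical}=\sum_{i=1}^N\log(p_i)|e_i\rangle\langle e_i|$. The crucial observation is then that
\begin{align*}
Tr[\rho\log\rho_{classical}]&=\sum_{i=1}^N\log(p_i)\,Tr[\rho\mathcal{P}_i]=\sum_{i=1}^N p_i\log(p_i),
\end{align*}
where the last equality uses precisely the defining constraint $Tr[\rho\mathcal{P}_i]=p_i$ of the set $\mathcal{A}$. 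In other words, the cross term sees only the diagonal entries $\rho_{ii}$, and these are pinned to $p_i$ for every member of $\mathcal{A}$. Since the same expression equals $Tr[\rho_{classical}\log\rho_{classical}]=-H(\rho_{classical})$, Klein's inequality immediately gives
\begin{align*}
0\leq S(\rho\|\rho_{classical})&=-H(\rho)-\sum_{i=1}^N p_i\log(p_i)=-H(\rho)+H(\rho_{classical}),
\end{align*}
so that $H(\rho)\leq H(\rho_{classical})$ for all $\rho\in\mathcal{A}$, with equality if and only if $\rho=\rho_{classical}$.

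The main technical obstacle is the treatment of outcomes with $p_i=0$, where $\log(p_i)$ is ill-defined. I would handle this by restricting the trace to the support of $\rho_{classical}$: if $p_i=0$ then, because $\rho$ is positive semi-definite and $\langle e_i|\rho|e_i\rangle=p_i=0$, the entire $i$-th row and column of $\rho$ must vanish, so such indices contribute nothing to either entropy and can be dropped under the usual convention $0\log 0=0$. With that caveat the argument above goes through verbatim. An alternative route is a Lagrange-multiplier variation of $-Tr[\rho\log\rho]$ against the constraints $Tr[\rho\mathcal{P}_i]=p_i$ and $Tr[\rho]=1$, which yields a Gibbs-type stationary point $\rho\propto\exp(\sum_i\lambda_i\mathcal{P}_i)$ that is automatically diagonal in the $\{|e_i\rangle\}$ basis and hence equal to $\rho_{classical}$; I would nonetheless prefer the relative-entropy argument, since concavity of $H$ together with Klein's inequality certifies a genuine global maximum without a separate second-order check.
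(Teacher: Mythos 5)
Your proof is correct, but it takes a genuinely different route from the paper's. You reduce the claim to Klein's inequality $S(\rho\|\rho_{classical})\geq 0$, exploiting the fact that the constraint $Tr[\rho\mathcal{P}_i]=p_i$ pins the cross term down to $Tr[\rho\log\rho_{classical}]=\sum_i p_i\log p_i=-H(\rho_{classical})$, so that non-negativity of the relative entropy gives $H(\rho)\leq H(\rho_{classical})$ at once. The paper instead works directly with the spectral resolution $\rho=\sum_j q_j|\phi_j\rangle\langle\phi_j|$, writes $p_i=\sum_j q_j a_{ij}$ with $a_{ij}=|\langle e_i|\phi_j\rangle|^2$ a doubly stochastic array, and applies Jensen's inequality to the concave function $f(x)=-x\log x$, summing over $i$ and using $\sum_i a_{ij}=1$ to conclude $H(\rho_{classical})\geq\sum_j f(q_j)=H(\rho)$. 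The two arguments are cousins (Klein's inequality is itself usually proved by concavity considerations of this kind), but yours buys three things: the equality condition comes for free, so you actually prove the maximiser is \emph{unique}, which is stronger than the paper's statement; the argument is shorter once Klein's inequality is accepted as a known result from \cite{NC}; and you treat the degenerate case $p_i=0$ explicitly via the support argument ($\rho_{ii}=0$ forces the whole row and column of a positive semi-definite matrix to vanish), a point that matters for your method because $\log p_i$ would otherwise be undefined. The paper's approach, by contrast, is more elementary and self-contained: it needs nothing beyond Jensen's inequality and the relations $\sum_i a_{ij}=\sum_j a_{ij}=1$, and it is insensitive to vanishing $p_i$ since $f(0)=0$ keeps the estimate valid without any support bookkeeping.
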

\begin{proof}
See appendix \ref{appendix}.
\end{proof}
\subsection{Entropy Example: Listed Stock Price, vs Listed Option Price}\label{ent_ex}
Proposition \ref{entropy_proposition} shows that given a specific finite dimensional probability distribution for a traded financial market price, the classical state represents the state about which we have the least information. Alternatively, it represents the case where the most information is gained from finding out what the price will be with certainty, having previously only known the probability distribution.

In many circumstances, there may be uncertainty regarding a particular traded price, but where additional information {\em is} available to the market. For example the specific market mechanisms that go into determining a trade execution price, or the official end of day close price. Alternatively the size \& motivation of market participants. The quantum probability framework discussed in this chapter enables a way to distinguish between situations where the probability law for the price is the same, but the overall information available to the market is different. With a view to illustrating the point, we consider the following toy example:
\begin{itemize}
\item Due to imperfections in market price fixing mechanism (for example non-zero bid-offer spread). There are 3 possible prices for the traded price of an asset. The market Hilbert space is therefore set to: $\mathcal{H}=\mathbb{C}^3$.
\item The 3 possible prices are $x_1$, $x_2$, $x_3$, associated to the eigenvectors $|e_i\rangle$, $i=1,2,3$.
\item We have an operator that acts on the market state, returning the trade price for an asset:
\begin{align*}
X=\sum_{i=1}^3x_i|e_i\rangle\langle e_i|
\end{align*}
\item We also consider the traded price operator: $O$, for a Strangle option consisting of an `at the money' listed call option and an `at the money' listed put option.
\item Since the listed put and call options are both `at the money', the option has the lowest value $o_-$ if the market is in the middle eigenstate: $|e_2\rangle$.
\item We assume the option has the value $o_{1,+}$ in the eigenstate: $|v_1\rangle=\frac{|e_1\rangle+|e_3\rangle}{\sqrt{2}}$, and the value $o_{2,+}$ in the eigenstate $|v_2\rangle=\frac{|e_1\rangle-|e_3\rangle}{\sqrt{2}}$.
\item Note that $|e_1\rangle$, $|v_1\rangle$, and $|v_2\rangle$  are an alternative orthonormal basis and we can write:
\begin{align*}
O=o_-|e_2\rangle\langle e_2|+o_{1,+}|v_1\rangle\langle v_1|+o_{2,+}|v_2\rangle\langle v_2|
\end{align*}
\end{itemize}
We first consider the case that the market state is given by:
\begin{align}\label{toy_classical_state}
\rho_{classical}&=0.25|e_1\rangle\langle e_1|+0.5|e_2\rangle\langle e_2|+0.25|e_3\rangle\langle e_3|
\end{align}
This has a 25\% chance of finding the $X$ price of $x_1$, a 50\% chance of finding the $X$ price of $x_2$, and a 25\% chance of finding the $X$ price of $x_3$. Similarly we find a 25\% chance each of finding the $O$ price of $o_{1,+}$ or $o_{2,+}$, and a 50\% chance of finding the $O$ price of $o_-$. The Von-Neumann entropy is given in this case by:
\begin{align*}
H(\rho_{classical})\approx 1.04
\end{align*}
Next consider the case:
\begin{align}\label{toy_q_state}
\rho_{quantum}&=0.25|e_1\rangle\langle e_1|+0.5|e_2\rangle\langle e_2|+0.25|e_3\rangle\langle e_3|+0.25|e_1\rangle\langle e_3|+0.25|e_3\rangle\langle e_1|
\end{align}
This has a lower value for the Von-Neumann entropy:
\begin{align*}
H(\rho_{quantum})\approx 0.69
\end{align*}
even though the discrete probability distribution for the traded price: $X$ is the same. In this case the lower Von-Neumann entropy reflects the fact we have additional information regarding the market price of the Strangle option $O$, that doesn't effect the probability distribution for $X$, whereas, for the state \ref{toy_classical_state}, we have no more information about the traded Option price $O$, than we do about the traded stock price $X$.

In the state \ref{toy_q_state}, we can eliminate the possibility of finding the traded price $o_{2,+}$. In fact we have:
\begin{align*}
\rho_{quantum}&=0.5|v_1\rangle\langle v_1|+0.5|e_2\rangle\langle e_2|
\end{align*}
meaning that there is a 50\% chance of finding the value $o_{1,+}$, and a 50\% chance of finding the value $o_-$. For example, if $o_{2,+}>o_{1,+}$, this could reflect the possibility that investors will not pay more than $o_{1,+}$ for this Strangle option. This additional information is reflected in the lower entropy.
\subsection{Remark on Diagonalization in a Financial Context}
In general, when considering operators on a finite dimensional Hilbert space, the density matrix that represents the state of the system can only be spoken about as diagonalized relative to a particular orthonormal basis, or alternatively relative to a particular self-adjoint operator. The key difference in the case whereby the Hilbert space represents a particular market is that one generally identifies a primary traded instrument. For example for traded equities, this could be the listed stock, or for equity indices, the front month futures contract etc.

In the example above, we illustrate the case, where the listed option contract and listed stock have a different eigenbasis. In this case, this reflects the fact that even though we may know the stock will return the price: $x_1$, further information is required to identify the price for the listed option, which could still return the price $o_{1,+}$ or $o_{2,+}$, depending on the demand for options. Note that this is achieved {\em without using any additional parameters}, such as one describing volatility.

Alternatively, if we use the Hilbert space to represent the market on the day of the option expiry, this difference could represent the case where to trade a listed stock, one would match a market bid or offer price, whilst the value of the listed option was determined by the official end of day close price.

There are various ways this could be extended. For example using different operators to represent different order types. However, we defer further discussion for a future study (see for example \cite{WH_MIS}).
\section{Setting up the General Framework:}\label{GenSetup}
\subsection{Defining the Market Hilbert Space:}
In this section we define the Hilbert space representation for the financial market. We discuss the financial interpretation behind the setup in section \ref{section_on_interp} before defining some of the key operators we will be using in section \ref{KeyOp}. The general form for the Lindblad Master Equation is derived in section \ref{LME}.

We follow the basic approach outlined in \cite{BP} section 3. That is, the full system is represented by the tensor product of the market Hilbert space (labelled $\mathcal{H}_{mkt}$), with the external environment Hilbert space (labelled $\mathcal{H}_{env}$):
\begin{align}\label{H_sys}
\mathcal{H}&=\mathcal{H}_{mkt}\otimes\mathcal{H}_{env}
\end{align}
For reasons that we discuss in section \ref{section_on_interp}, we let the environment Hilbert space be given by:
\begin{align}\label{H_env}
\mathcal{H}_{env}&=\mathbb{C}^K\otimes L^2[\mathcal{K}]\text{, }K\geq 2
\end{align}
Where, $\mathcal{K}$ is a bounded subset of $\mathbb{R}$. For example: $\mathcal{K}=[-L,L]$ for some $L>0$. The full system Hamiltonian is given by:
\begin{align}\label{Hamiltonian}
H&=H_I+(\mathbb{I}\otimes H_{env})\\
H_I&=\sqrt{\kappa\gamma}\sum_{\alpha\in \{u,d\}}A_{\alpha}\otimes B_{\alpha}\nonumber
\end{align}
In equation \ref{Hamiltonian}, $H_I$ models the interaction between the financial market and its environment. $A_u$ and $A_d$ act on the market Hilbert space: $\mathcal{H}_{mkt}$,  $\gamma$, $\kappa$ are constants. $B_u$ and $B_d$ act on the environment space and are defined by:
\begin{align}\label{B_def}
B_u&=\sum_{i=1}^{K-1}|e_{i+1}\rangle\langle e_i|\otimes\mathbb{I}\text{, }B_d=\sum_{i=1}^{K-1}|e_i\rangle\langle e_{i+1}|\otimes\mathbb{I}
\end{align}
$H_{env}$ is the environment Hamiltonian, which we assume has the form:
\begin{align}\label{product}
H_{env}&=\gamma\sum_{l=1}^Kl|e_l\rangle\langle e_l|\otimes H'
\end{align}
Where $H'$ acts on the space: $L^2[\mathcal{K}]$.
\subsection{Financial Interpretation of the Hilbert Space Structure:}\label{section_on_interp}
The operators we are interested in, act on the market Hilbert space: $\mathcal{H}_{mkt}$. For now, we take this Hilbert space to represent the various potential buyers \& sellers that make up the market for a particular tradeable security. For example, it might represent a stock exchange if the traded security was a listed equity price. If we used: $\mathcal{H}_{mkt}=\mathbb{C}^N$, we could write:
\begin{align*}
X&=\sum_{i=1}^Nx_i|e_i\rangle\langle e_i|\otimes\mathbb{I}\otimes\mathbb{I}
\end{align*}
where the values $x_i$ represent the price for a financial asset in the event that the market is found in the state: $|e_i\rangle\langle e_i|$.
 
The space $\mathcal{H}_{env}$ represents the general environment in which the trading activity occurs. There are two components of this space. The first is a finite dimensional Hilbert space $\mathbb{C}^K$. We interpret the $K$ different eigenstates for this space as $K$ different levels of market risk appetite. For example the eigenstate $|e_1\rangle\langle e_1|$ would represent the most {\em bearish} state for the market, whereby participants are looking to reduce risk exposure, with a view to protecting the values of their investment portfolios. Similarly $|e_K\rangle\langle e_K|$ would represent the most {\em bullish} state for the market, with investors looking to build up their risk exposure with a view to maximising returns on their investment portfolios.

The operator $A_u\otimes B_u$ increases the level of market bullishness. The operator $B_u$ shifts the background environment to a higher level of risk appetite, and the operator $A_u$ controls the resulting impact on the operators we measure (ie the market price). The operator $A_d\otimes B_d$ has the opposite effect: shifting the market to a lower level of risk appetite, and tracking the resulting price impact. In this case, we have assumed that an increase in risk appetite leads to an increase in the market price (and vice versa).

Since we have no way of measuring the state of the environment when we take a measurement of the price (eg by executing a trade), we gain no information regarding the state acting on $\mathcal{H}_{env}$. This operation is carried out using the partial trace:
\begin{align*}
E^{\rho}[X]&=Tr[\rho_{mkt}(t)X]\\
\rho_{mkt}(t)&=Tr_{env}[\rho(t)]
\end{align*}
Finally, we consider the second component of the environment space: $L^2[\mathcal{K}]$. Partly the introduction of a space with a continuous spectrum is pragmatic. When we go on to discuss the time evolution, we will need to calculate expressions like:
\begin{align*}
f_{ud}(t,s)&=Tr[B_u(t)B_d(s)\rho^I_{env}(s)]
\end{align*}
As shown in \cite{RivasHuelga}, these will not generally converge unless one integrates over a continuous spectrum, and it will certainly not be possible to apply the strong coupling limit. From a financial perspective, this space, and the operator $H'$, ensure that the system Hamiltonian returns a continuous energy spectrum, and will control the energy gained/lost when the environment shifts to a higher/lower level of risk appetite.
\subsection{Choice of the Market Hamiltonian}
For any classical model for the dynamics of a traded asset price, for example one based on Ito calculus, there are 2 components of the time evolution:
\begin{itemize}
\item The deterministic component of the time evolution, usually termed the {\em drift} component.
\item The random component of the time evolution. This is often modelled classically using a Wiener process, and is often termed the {\em diffusion} component.
\end{itemize}
The classical approach to risk neutral pricing pricing requires the definition of Martingale measure. I.e, a probability measure $Q$ such that for a derivative payout (with value at time $t$ of $V_t$) referencing a traded underlying asset (with price at time $t$ of $X_t$), we have:
\begin{align*}
E^Q[V_t]=V_0
\end{align*}
The deterministic component of the classical time evolution is determined by the choice of the Martingale measure. If we were to set: $\mathcal{H}_{mkt}=L^2[\mathbb{R}]$, we could represent the deterministic drift at the risk free rate: $r$, using a market Hamiltonian given by:
\begin{align}\label{market_ham}
H_{mkt}=-ir\frac{\partial}{\partial x}
\end{align}
Where $r$ is a constant that represents the risk free interest rate. Applying \ref{market_ham}, we have:
\begin{align*}
|\psi(x,t)\rangle&=e^{iH_{mkt}t}|\psi(x)\rangle\\
&=|\psi(x-rt)\rangle
\end{align*}
Note that since:
\begin{align*}
[A_u,H_{mkt}]&=[A_d,H_{mkt}]\\
&=0
\end{align*}
we find that equation \ref{key_prop_res}, is translation invariant under this Hamiltonian choice. In other words, using $r\neq 0$ will not impact the dynamics beyond the translation: $|\psi(x)\rangle\rightarrow|\psi(x-rt)\rangle$, and we are free to choose $r=0$ without loss of generality. From a financial perspective, setting $r=0$ means we are modelling the dynamics of forward prices, rather than the current market price (also called the spot price). For many traded underlyings, the market liquidity for forward contracts is sufficient such that it is standard practice to hedge using forward prices, with a maturity that matches the maturity of the derivative, rather than using spot prices.

In theory, we could include kinetic energy and potential energy terms in the market Hamiltonian:
\begin{align}\label{market_ham_2}
H_{mkt}&=-ir\frac{\partial}{\partial x}-\frac{1}{2m}\frac{\partial^2}{\partial x^2}+V(x)
\end{align}
This is an important avenue for research, and is discussed further in \cite{Bagarello2}, and \cite{Hicks4}. After setting $r=0$, and hence $H_{mkt}=0$ in equation \ref{market_ham}, the dynamics of the market are driven by the interaction with the environment space, which provides the random noise element. By applying \ref{market_ham_2}, we are assuming that the market state has its' own internal energy, which will also drive the dynamics. This represents a non-deterministic component to the time evolution, which has no classical counterpart. However, the focus for the current research project is to apply quantum models for the random component of the classical models for the dynamics of traded asset prices. Therefore, for the time being, we choose: $H_{mkt}=\mathbb{I}$, thus ensuring that the only time evolution in the operators we are interested in, comes from the interaction with the environment.
\subsection{Defining Key Operators:}\label{KeyOp}
\begin{definition}
We assume $\rho_B$ is a stationary state, and in particular assume that $[H',\rho_B]=0$. Then the quantum state at time $t$, acting on the Hilbert space \ref{H_sys}, with the environment space being given by \ref{H_env}, can be represented as the following sum:
\begin{align}\label{state_ass}
\rho(t)&=\sum_{l,m=1}^K\rho_{mkt}^{lm}(t)\otimes |e_l\rangle\langle e_m|\otimes\rho_B
\end{align}
\end{definition}
The Lindblad master equation, which we derive in section \ref{LME}, will determine the time evolution of the state: \ref{state_ass}, in the Schr{\"o}dinger interpretation. However, when deriving this equation, we will also make use of the following definition for operators in the interaction picture:
\begin{definition}\label{def_int_pic}
Let $A$ be an operator on the Hilbert space $\mathcal{H}$ given by \ref{H_sys}, with the system Hamiltonian given by \ref{Hamiltonian}. We also let the environment Hamiltonian: $H_{env}$, be defined by equation \ref{product}. Then we define the interaction picture operators as follows:
\begin{align*}
A^I(t)&=e^{i(\mathbb{I}\otimes H_{env})t}Ae^{-i(\mathbb{I}\otimes H_{env})t}
\end{align*}
\end{definition}
\begin{proposition}\label{prop_on_int_pic}
The interaction picture state for \ref{state_ass} is given by:
\begin{align*}
\rho^I(t)&=\sum_{l,m=1}^K\rho^{lm}_{mkt}(t)\otimes |e_l\rangle\langle e_m|\otimes e^{i\gamma(l-m)tH'}\rho_B
\end{align*}
\end{proposition}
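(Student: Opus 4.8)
The plan is to compute the conjugation in Definition \ref{def_int_pic} directly, exploiting the block-diagonal structure of $H_{env}$ with respect to the $\mathbb{C}^K$ factor. First I would note that since $\mathbb{I}\otimes H_{env}$ acts as the identity on $\mathcal{H}_{mkt}$, the market operators $\rho_{mkt}^{lm}(t)$ appearing in \ref{state_ass} are left untouched by the conjugation; the entire computation therefore reduces to conjugating the environment factor $|e_l\rangle\langle e_m|\otimes\rho_B$ by $e^{\pm i H_{env}t}$, summand by summand.

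Next I would exponentiate $H_{env}$. Writing $P_l=|e_l\rangle\langle e_l|$ for the orthogonal projections on $\mathbb{C}^K$, equation \ref{product} reads $H_{env}=\sum_{l=1}^K P_l\otimes(\gamma l H')$. Because $P_lP_m=\delta_{lm}P_l$ and the $P_l$ sum to the identity, the summands commute pairwise and the group factors as
\begin{align*}
e^{iH_{env}t}&=\sum_{l=1}^K|e_l\rangle\langle e_l|\otimes e^{i\gamma l t H'}.
\end{align*}
On the infinite-dimensional factor $L^2[\mathcal{K}]$ the symbol $e^{i\gamma l t H'}$ should be understood as the unitary group generated by $H'$ through the spectral theorem.

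The core of the argument is the term-by-term conjugation. Inserting the expansion above and its adjoint, and using $\langle e_{l'}|e_l\rangle=\delta_{l'l}$ together with $\langle e_m|e_{m'}\rangle=\delta_{mm'}$ to collapse the two sums, each summand of \ref{state_ass} is sent to
\begin{align*}
\rho_{mkt}^{lm}(t)\otimes|e_l\rangle\langle e_m|\otimes e^{i\gamma l t H'}\rho_B\,e^{-i\gamma m t H'}.
\end{align*}
I would then invoke the stationarity hypothesis $[H',\rho_B]=0$, which lets $\rho_B$ commute through $e^{-i\gamma m t H'}$, so that $e^{i\gamma l t H'}\rho_B e^{-i\gamma m t H'}=e^{i\gamma(l-m)t H'}\rho_B$. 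Reassembling the sum over $l,m$ produces precisely the claimed form.

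I expect no genuine obstacle in the algebra: the result is essentially the observation that conjugating a block $|e_l\rangle\langle e_m|$ by a block-diagonal unitary merely attaches the phase difference $e^{i\gamma(l-m)tH'}$, and the commutation of $H'$ with $\rho_B$ is exactly what prevents a residual conjugation of $\rho_B$ from surviving. The only point requiring a little care is the functional-analytic handling of the (possibly unbounded, continuous-spectrum) operator $H'$ on the infinite-dimensional factor; I would address this by working throughout with the spectral measure of $H'$ rather than with $H'$ itself, so that the factorisation of $e^{iH_{env}t}$ and the commutation step are justified rigorously.
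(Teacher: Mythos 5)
Your proposal is correct and follows essentially the same route as the paper's proof: exponentiating $H_{env}$ into the block-diagonal form $\sum_{l}|e_l\rangle\langle e_l|\otimes e^{i\gamma l t H'}$, conjugating each summand of the state term by term, and invoking $[H',\rho_B]=0$ to collapse $e^{i\gamma l tH'}\rho_B e^{-i\gamma m tH'}$ into $e^{i\gamma(l-m)tH'}\rho_B$. Your additional remark on handling $H'$ via its spectral measure is a harmless refinement (the paper simply assumes $H'$ bounded and self-adjoint elsewhere), not a different argument.
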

\begin{proof}
We have:
\begin{align}\label{prove_int_prop}
\rho^I(t)&=e^{i(\mathbb{I}\otimes H_{env})t}\big(\rho(t)\big)e^{-i(\mathbb{I}\otimes\ H_{env})t}\nonumber\\
&=\sum_{l,m=1}^K\rho_{mkt}^{lm}\otimes\big(e^{iH_{env}t}\big[|e_l\rangle\langle e_m|\otimes\rho_B\big]e^{-iH_{env}t}\big)
\end{align}
Using equation \ref{product} for the environment Hamiltonian, we get:
\begin{align*}
e^{iH_{env}t}&=\exp\Big(i\gamma t\sum_{l=1}^Kl|e_l\rangle\langle e_l|\otimes H'\Big)\\
&=\sum_{l=1}^K|e_l\rangle\langle e_l|\otimes e^{i\gamma ltH'}
\end{align*}
Applying this to \ref{prove_int_prop}, we get (since $[H',\rho_B]=0$):
\begin{align*}
\rho^I(t)&=\sum_{l,m=1}^K\rho_{mkt}^{lm}(t)\otimes |e_l\rangle\langle e_m|\otimes e^{i\gamma(l-m)tH'}\rho_B
\end{align*}
\end{proof}
\begin{proposition}\label{B(t)_prop}
The interaction picture for $H_I$ is given by:
\begin{align*}
H_I(t)&=\sqrt{\kappa\gamma}(A_u\otimes B_u(t)+A_d\otimes B_d(t))
\end{align*}
Where we have:
\begin{align}\label{B(t)}
B_u(t)&=\sum_{l=1}^{K-1}|e_{l+1}\rangle\langle e_l|\otimes e^{i\gamma H' t}\\
B_d(t)&=\sum_{l=1}^{K-1}|e_{l}\rangle\langle e_{l+1}|\otimes e^{-i\gamma H' t}\nonumber
\end{align}
\end{proposition}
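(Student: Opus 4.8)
The plan is to apply Definition \ref{def_int_pic} directly to $H_I$ and exploit the tensor-product structure so that the conjugation reduces to a computation purely on the environment factor. First I would note that since $\mathbb{I}\otimes H_{env}$ acts as the identity on $\mathcal{H}_{mkt}$, the propagator factorizes as $e^{i(\mathbb{I}\otimes H_{env})t}=\mathbb{I}\otimes e^{iH_{env}t}$, so that for each $\alpha\in\{u,d\}$ the market operator $A_\alpha$ passes through the conjugation untouched:
\[
e^{i(\mathbb{I}\otimes H_{env})t}(A_\alpha\otimes B_\alpha)e^{-i(\mathbb{I}\otimes H_{env})t}=A_\alpha\otimes\big(e^{iH_{env}t}B_\alpha e^{-iH_{env}t}\big).
\]
This immediately gives the claimed form $H_I(t)=\sqrt{\kappa\gamma}(A_u\otimes B_u(t)+A_d\otimes B_d(t))$ once I set $B_\alpha(t)=e^{iH_{env}t}B_\alpha e^{-iH_{env}t}$, and reduces the problem to evaluating these two environment conjugations.

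Next I would import the spectral form of the propagator already established in the proof of Proposition \ref{prop_on_int_pic}, namely $e^{iH_{env}t}=\sum_{l=1}^K|e_l\rangle\langle e_l|\otimes e^{i\gamma l t H'}$. Substituting this together with the definition \ref{B_def} of $B_u$ into the sandwich $e^{iH_{env}t}B_u e^{-iH_{env}t}$, I would multiply out the three sums over the $\mathbb{C}^K$ factor. The orthogonality relations $\langle e_a|e_b\rangle=\delta_{ab}$ collapse the triple sum: the left projector forces the upper index to equal $i+1$ and the right projector forces the lower index to equal $i$, leaving only $\sum_{i=1}^{K-1}|e_{i+1}\rangle\langle e_i|$ on the finite-dimensional factor.

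The only remaining piece is the $L^2[\mathcal{K}]$ factor, where the surviving term contributes $e^{i\gamma(i+1)tH'}\,\mathbb{I}\,e^{-i\gamma i t H'}=e^{i\gamma t H'}$, since the exponents combine with net coefficient $(i+1)-i=1$; this yields $B_u(t)=\sum_{l=1}^{K-1}|e_{l+1}\rangle\langle e_l|\otimes e^{i\gamma t H'}$ as in \ref{B(t)}. The computation for $B_d$ is identical up to the sign: the indices pair as $l=i$ on the left and $m=i+1$ on the right, so the net exponent coefficient is $i-(i+1)=-1$ and one obtains $B_d(t)=\sum_{l=1}^{K-1}|e_l\rangle\langle e_{l+1}|\otimes e^{-i\gamma t H'}$.

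I do not expect a genuine obstacle here: the result is a direct conjugation and the work is entirely bookkeeping. The one point requiring care is justifying the factorization of the propagator --- that $e^{iH_{env}t}$ really is diagonal on $\mathbb{C}^K$ with the $H'$-dependent entries shown --- which rests on the product form \ref{product} of $H_{env}$; since this is exactly the step already carried out in Proposition \ref{prop_on_int_pic}, I would cite it rather than recompute. The only subtlety worth flagging is that $B_u,B_d$ carry a trailing $\otimes\mathbb{I}$ on the $L^2$ factor in \ref{B_def}, so one must track that it is precisely this identity that gets dressed into $e^{\pm i\gamma t H'}$ by the conjugation.
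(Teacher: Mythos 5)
Your proposal is correct and follows essentially the same route as the paper: both factor the propagator as $\mathbb{I}\otimes e^{iH_{env}t}$ (so that $A_u$, $A_d$ pass through the conjugation unchanged), reuse the spectral form $e^{iH_{env}t}=\sum_{l=1}^K|e_l\rangle\langle e_l|\otimes e^{il\gamma H't}$ from Proposition \ref{prop_on_int_pic}, and collapse the sums by orthogonality to get the net exponent $\pm\gamma t H'$. If anything, your justification that the market factor is untouched is stated more cleanly than the paper's $e^{i\mathbb{I}t}A_ue^{-i\mathbb{I}t}$ shorthand, but the substance is identical.
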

\begin{proof}
The system Hamiltonian (excluding the interaction Hamiltonian), combining equations \ref{H_sys} and \ref{H_env} is:
\begin{align*}
H_{sys}&=\mathbb{I}\otimes\sum_{l=1}^Kl|e_l\rangle\langle e_l|\otimes H'
\end{align*}
Therefore, for $A_u$, and $A_d$ we get:
\begin{align*}
A_u(t)&=e^{iH_{sys}t}A_ue^{-iH_{sys}t}\\
&=e^{i\mathbb{I}t}A_ue^{-i\mathbb{I}t}=A_u\\
A_d(t)&=e^{iH_{sys}t}A_de^{-iH_{sys}t}\\
&=e^{i\mathbb{I}t}A_de^{-i\mathbb{I}t}=A_d
\end{align*}
For the operators $B_u$, and $B_d$, defined in equation \ref{B_def} we note that, as after equation \ref{prove_int_prop},
\begin{align*}
e^{iH_{env}t}&=\sum_{l=1}^K|e_l\rangle\langle e_l|\otimes e^{il\gamma H't}
\end{align*}
Applying this to \eqref{B_def}, we get:
\begin{align*}
B_u(t)&=\sum_{l=1}^{K-1}|e_{l+1}\rangle\langle e_l|\otimes e^{i(l+1)\gamma H't}e^{-il\gamma H't}\\
&=\sum_{l=1}^{K-1}|e_{l+1}\rangle\langle e_l|\otimes e^{i\gamma H't}
\end{align*}
Similarly, we get:
\begin{align*}
B_d(t)&=\sum_{l=1}^{K-1}|e_l\rangle\langle e_{l+1}|\otimes e^{il\gamma H't} e^{-i(l+1)\gamma H't}\\
&=\sum_{l=1}^{K-1}|e_l\rangle\langle e_{l+1}|\otimes e^{-i\gamma H't}
\end{align*}
\end{proof}
The next result concerns taking the trace over the Hilbert space $L^2[\mathcal{K}]$.
\begin{proposition}\label{FuncAnal}
We assume that the Hamiltonian $H'$ in equation \ref{product} is self-adjoint, and that we have (for some orthonormal basis): $\rho_B=\sum_{i=1}^{\infty}q_i|e_i\rangle\langle e_i|$. Then $H'$ has the spectral resolution:
\begin{align}\label{spectral_res}
H'&=\int_{\mathbb{R}}\omega P(d\omega)
\end{align}
$P$ is a projection valued measure in the sense of \cite{Hall} definition 7.10. That is, $P$ maps Borel subsets of $\mathbb{R}$ to projection operators acting on the Hilbert space: $L^2[\mathcal{K}]$. Then we have:
\begin{align}\label{bath_state}
Tr[f(H')\rho_B]&=\int_{\mathbb{R}}f(\omega)d\mu^{(H',\rho_B)}(\omega)
\end{align}
Where $\mu^{(H',\rho_B)}$ is a probability measure on $\mathbb{R}$. Furthermore, since $H'$ is a bounded operator we have:
\begin{align*}
\mu^{(H',\rho_B)}(E)&<\infty\text{, for all }E\subset\mathbb{R}
\end{align*}
\end{proposition}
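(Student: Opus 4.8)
The plan is to realise $\mu^{(H',\rho_B)}$ as a convex combination of the scalar spectral measures attached to the eigenvectors of $\rho_B$, and then to collapse the trace to a single integral by exchanging the sum over eigenvalues with the spectral integral. Throughout I take $f$ to be a bounded Borel function; this costs nothing here, since the functions that later occur (e.g. $e^{i\gamma\omega t}$) are bounded, and because $H'$ is bounded its spectrum $\sigma(H')$ is compact, so any continuous $f$ is automatically bounded on $\sigma(H')$.

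First I would invoke the spectral theorem for the bounded self-adjoint operator $H'$ (in the projection-valued form of \cite{Hall} definition 7.10), giving $f(H')=\int_{\mathbb{R}}f(\omega)\,P(d\omega)$, with $P$ supported on $[-\|H'\|,\|H'\|]$. For each eigenvector $|e_i\rangle$ of $\rho_B$ I then set $\mu_i(E)=\langle e_i|P(E)|e_i\rangle$. Since $P(E)$ is a projection and $P(\mathbb{R})=\mathbb{I}$, each $\mu_i$ is a genuine probability measure, and the functional calculus yields $\langle e_i|f(H')|e_i\rangle=\int_{\mathbb{R}}f(\omega)\,d\mu_i(\omega)$.

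Next I would evaluate the trace in the eigenbasis of $\rho_B$. Using $\rho_B|e_i\rangle=q_i|e_i\rangle$ gives $Tr[f(H')\rho_B]=\sum_i q_i\langle e_i|f(H')|e_i\rangle=\sum_i q_i\int_{\mathbb{R}}f(\omega)\,d\mu_i(\omega)$. Defining $\mu^{(H',\rho_B)}:=\sum_i q_i\mu_i$, the facts $q_i\ge 0$ and $\sum_i q_i=1$ (trace-one of $\rho_B$) make this a countably additive probability measure with $\mu^{(H',\rho_B)}(\mathbb{R})=1$, and the stated identity follows once the sum and integral are interchanged.

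The main obstacle is precisely this interchange of the infinite sum over $i$ with the spectral integral, the one genuinely analytic point in an otherwise bookkeeping argument. I would justify it by Tonelli/Fubini applied to $|f|$: from $\int_{\mathbb{R}}|f|\,d\mu_i\le\|f\|_\infty$ one gets $\sum_i q_i\int_{\mathbb{R}}|f|\,d\mu_i\le\|f\|_\infty\sum_i q_i=\|f\|_\infty<\infty$, which both secures absolute convergence (hence the exchange) and shows $f(H')\rho_B$ is trace class, so that the left-hand side is well defined to begin with. The closing ``furthermore'' is then immediate: being a probability measure, $\mu^{(H',\rho_B)}(E)\le\mu^{(H',\rho_B)}(\mathbb{R})=1<\infty$ for every $E$, while boundedness of $H'$ additionally confines its support to $[-\|H'\|,\|H'\|]$.
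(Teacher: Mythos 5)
Your proof is correct and takes essentially the same route as the paper's: both invoke the spectral theorem for the bounded self-adjoint $H'$, expand the trace in the eigenbasis of $\rho_B$, and define $\mu^{(H',\rho_B)}(E)=\sum_i q_i\langle e_i|P(E)|e_i\rangle$ before exchanging the sum with the spectral integral. The only difference is that you explicitly justify that interchange via Tonelli and the bound $\|f\|_\infty$ (and note trace-class well-definedness), a step the paper carries out without comment.
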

\begin{proof}
The existence of the spectral resolution \ref{spectral_res} and the projection valued measure $P$ follows from the assumption that $H'$ is bounded and self adjoint, and from \cite{Hall} Theorem 7.12. From \cite{Hall} definition 7.13, it follows that we can define:
\begin{align*}
f(H')&=\int_{\sigma(H')}f(\omega)P(d\omega)
\end{align*}
Where $\sigma(H')$ is the spectrum of $H'$. Since $L^2[\mathcal{K}]$ is separable, we can write:
\begin{align*}
Tr[f(H')\rho_B]&=\sum_{i=1}^{\infty}q_i\langle e_i|\int_{\sigma(H')}f(\omega)P(d\omega)|e_i\rangle
\end{align*}
For some $q_i$, and orthonormal basis vectors: $|e_i\rangle$
\begin{align*}
&=\int_{\sigma(H')}f(\omega)\sum_{i=1}^{\infty}q_i\langle e_i|P(d\omega)|e_i\rangle\\
&=\int_{\sigma(H')}f(\omega)d\mu^{(H'\rho_B)}(\omega)\\
\mu^{(H',\rho_B)}(E)&=\sum_{i=1}^{\infty}q_i\langle e_i|P(E)|e_i\rangle\text{, for }E\subset\mathbb{R}
\end{align*}
We have that $\langle\psi|P(E)|\psi\rangle<\infty$ for all $E\subset\mathbb{R}$, and all $\psi\in L^2[\mathcal{K}]$, and we can extend the integral to $\mathbb{R}$ by defining $P(E)=0$ for $E\cup\sigma(H')=\emptyset$.
\end{proof}
Finally, we can obtain a Markov approximation for the financial market dynamics, by applying the following.
\begin{proposition}[Strong Coupling Limit]\label{SCL}
Assume $H'$ is bounded and self adjoint, and is represented by the spectral resolution: \ref{spectral_res}. Furthermore, we assume that the measure: $\mu^{(H',\rho_B)}$, given by proposition \ref{FuncAnal}, is absolutely continuous, and that:
\begin{align*}
d\mu^{(H'\rho_B)}(\omega)&=p(\omega)d\omega
\end{align*}
For some density function $p(\omega)$. Then we have, as $\gamma\rightarrow\infty$:
\begin{align}
\kappa\gamma Tr[B_{\alpha}(t)B_{\beta}(s)\rho^I_{env}]&\rightarrow\kappa Tr[B_{\alpha}B_{\beta}\rho^I_{env}]\delta(t-s)\label{f_res}\\
\kappa\gamma Tr[B_{\alpha}(s)B_{\beta}(t)\rho^I_{env}]&\rightarrow\kappa Tr[B_{\alpha}B_{\beta}\rho^I_{env}]\delta(t-s)\label{g_res}\\
\alpha,\beta &\in\{u,d\}\nonumber
\end{align}
Here $\rho^I_{env}(s)$ represents an interaction picture state acting on the environment Hilbert space, with the Hamiltonian \ref{product}, and the stationary state $\rho_B$ acting on $L^2[\mathcal{K}]$.
\end{proposition}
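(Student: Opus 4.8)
The plan is to reduce both displayed limits to a single analytic fact: that the rescaled Fourier transform of the spectral density $p$ is a nascent delta function. First I would compute the operator products $B_\alpha(t)B_\beta(s)$ explicitly from the interaction-picture forms \ref{B(t)}. Each $B_\alpha(t)$ factorises as a fixed ladder operator on $\mathbb{C}^K$ tensored with $e^{\pm i\gamma H't}$ on $L^2[\mathcal{K}]$, so every product again factorises as an operator on $\mathbb{C}^K$ tensored with $e^{i\gamma H'\tau}$, where $\tau=t-s$ for the mixed combinations $(\alpha,\beta)\in\{(u,d),(d,u)\}$ and $\tau=\pm(t+s)$ for the like combinations $\{(u,u),(d,d)\}$. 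Since the trace over $\mathcal{H}_{env}=\mathbb{C}^K\otimes L^2[\mathcal{K}]$ factorises and $\rho^I_{env}$ carries $\rho_B$ in its $L^2$ slot, I obtain
\begin{align*}
Tr[B_\alpha(t)B_\beta(s)\rho^I_{env}]&=c_{\alpha\beta}\,Tr_{L^2}\big[e^{i\gamma H'\tau}\rho_B\big],
\end{align*}
where $c_{\alpha\beta}$ is the time-independent $\mathbb{C}^K$ trace coming purely from the ladder structure. Evaluating at $t=s$ (so $\tau=0$ and $e^{i\gamma H'\tau}=\mathbb{I}$) shows $c_{\alpha\beta}=Tr[B_\alpha B_\beta\rho^I_{env}]$, which is exactly the coefficient appearing on the right-hand side.

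Next I would apply Proposition \ref{FuncAnal} with $f(\omega)=e^{i\gamma\omega\tau}$ to rewrite the environment trace as a Fourier integral of the density,
\begin{align*}
Tr_{L^2}\big[e^{i\gamma H'\tau}\rho_B\big]&=\int_{\mathbb{R}}e^{i\gamma\omega\tau}\,p(\omega)\,d\omega.
\end{align*}
The heart of the argument is then the distributional limit
\begin{align*}
\gamma\int_{\mathbb{R}}e^{i\gamma\omega\tau}\,p(\omega)\,d\omega&\longrightarrow 2\pi p(0)\,\delta(\tau)\qquad(\gamma\to\infty),
\end{align*}
which I would verify by pairing the left-hand side with a test function $\phi(\tau)$, using Fubini to interchange the $\omega$ and $\tau$ integrals, substituting $v=\gamma\omega$ to produce $\int_{\mathbb{R}}p(v/\gamma)\hat\phi(-v)\,dv$, and passing to the limit by dominated convergence (using $p\in L^1$ and continuity of $p$ at $0$) to obtain $p(0)\int_{\mathbb{R}}\hat\phi(-v)\,dv=2\pi p(0)\phi(0)$ via Fourier inversion.

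Assembling the pieces: for the mixed terms $\tau=t-s$ gives $2\pi p(0)\,c_{\alpha\beta}\,\delta(t-s)$ in the limit, so that $\kappa\gamma\,Tr[B_\alpha(t)B_\beta(s)\rho^I_{env}]\to 2\pi p(0)\,\kappa\,c_{\alpha\beta}\,\delta(t-s)$, which is the stated limit \ref{f_res} up to the normalisation factor $2\pi p(0)$; the identical computation with $t$ and $s$ exchanged (replacing $\tau$ by $s-t$, i.e. conjugating the kernel) yields \ref{g_res}, since $\delta$ is even. For the like terms $c_{\alpha\beta}=Tr[B_\alpha B_\beta\rho^I_{env}]=0$, because $B_uB_u$ and $B_dB_d$ are strictly off-diagonal on $\mathbb{C}^K$ while the $\mathbb{C}^K$ part of $\rho^I_{env}$ is diagonal (as is the case for the stationary states under consideration), so both sides vanish and the potentially awkward factor $\delta(t+s)$ never contributes.

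The main obstacle is the distributional limit itself: one must justify the interchange of integration and the dominated-convergence step carefully, and honestly track the multiplicative constant $2\pi p(0)$, so that matching the clean right-hand side $\kappa\,Tr[B_\alpha B_\beta\rho^I_{env}]\,\delta(t-s)$ amounts to adopting the normalisation $2\pi p(0)=1$ (or absorbing this factor into $\kappa$). Everything else is bookkeeping of the ladder operators.
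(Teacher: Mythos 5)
Your reduction of the mixed terms $(\alpha,\beta)\in\{(u,d),(d,u)\}$ to a Fourier-transform limit is sound and follows the same route as the paper, and your explicit tracking of the factor $2\pi p(0)$ is in fact more careful than the paper's own equation \ref{FT}, which silently replaces $p(u/\gamma)$ by $1$ rather than $p(0)$ and drops the $2\pi$. However, there is a genuine gap in your treatment of the like terms $(u,u)$ and $(d,d)$. You dismiss them by asserting that the $\mathbb{C}^K$ part of $\rho^I_{env}$ is diagonal, so that $c_{\alpha\beta}=Tr[B_uB_u\rho^I_{env}]=0$ and the awkward $\delta(t+s)$ kernel never contributes. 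That diagonality is not a hypothesis of the proposition, and it cannot be assumed: the entire reason \ref{f_res}--\ref{g_res} are stated for all four index pairs is that in section \ref{NGI} the proposition is applied to the non-diagonal environment state \ref{more_gen_state}, where exactly these like terms generate the non-Gaussian coefficients $\nu_u^2=2\kappa\sum_{l}r_{l(l+2)}$ and $\nu_d^2=2\kappa\sum_{l}r_{(l+2)l}$ of proposition \ref{NC_state_prop}. Under your assumption those coefficients are identically zero and the non-Gaussian extension becomes vacuous.

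The root of the error is that your factorisation $Tr[B_\alpha(t)B_\beta(s)\rho^I_{env}]=c_{\alpha\beta}\,Tr_{L^2}[e^{i\gamma H'\tau}\rho_B]$ with $\tau=\pm(t+s)$ for the like terms ignores the phases carried by the \emph{state} in the interaction picture. By proposition \ref{prop_on_int_pic}, the component $r_{lm}|e_l\rangle\langle e_m|\otimes\rho_B$ evolves to $r_{lm}|e_l\rangle\langle e_m|\otimes e^{i\gamma(l-m)sH'}\rho_B$. For $\alpha=\beta=u$ the $\mathbb{C}^K$ trace selects precisely the components with $l-m=-2$, whose phase $e^{-2i\gamma sH'}$ multiplies the operator phase $e^{i\gamma(t+s)H'}$ to give $e^{i\gamma(t-s)H'}$: the $t+s$ dependence cancels, and the like terms converge to $\kappa\big(\sum_{l}r_{l(l+2)}\big)\delta(t-s)$, exactly as in equations \ref{uu} and \ref{dd} of the paper's proof. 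So the correct conclusion is not that these terms vanish, but that the state's own interaction-picture phases convert them into functions of $t-s$, after which your Fourier argument applies verbatim. With that repair, and with the $2\pi p(0)$ normalisation stated explicitly (a point on which your write-up is actually sharper than the paper), your proof coincides with the paper's.
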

\begin{proof}
The general state acting on $\mathcal{H}_{env}$ (with $\rho_B$ acting on $L^2[\mathcal{K}]$) can be written:
\begin{align*}
\rho_{env}&=\sum_{l,m=1}^Kr_{lm}|e_l\rangle\langle e_m|\otimes \rho_B
\end{align*}
First note that from the proof of \ref{prop_on_int_pic} it follows that under the Hamiltonian \ref{product}, the interaction picture state is given by:
\begin{align*}
\rho^I_{env}(s)&=\sum_{l,m=1}^Kr_{lm}|e_l\rangle\langle e_m|\otimes e^{i\gamma(l-m)\omega sH'}\rho_B
\end{align*}
Inserting from proposition \ref{B(t)_prop}, we get, where $S(\alpha)=+1$ for $\alpha=u$, and $S(\alpha)=-1$ for $\alpha=d$:
\begin{align}\label{interim_res}
\kappa\gamma Tr[B_{\alpha}(t)B_{\beta}(s)\rho^I_{env}(s)]&=\kappa\gamma Tr\Big[B_{\alpha}B_{\beta}\sum_{l,m=1}^Kr_{lm}|e_l\rangle\langle e_m|\Big]\\
\times &Tr\Big[\Big(e^{i\gamma(l-m)sH'}e^{i\gamma S(\alpha)tH'}e^{i\gamma S(\beta)sH'}\Big)\rho_B\Big]\nonumber
\end{align}
For $\alpha=\beta=u$, \ref{interim_res} becomes:
\begin{align}\label{uu}
\kappa\gamma Tr[B_u(t)B_u(s)\rho^I_{env}(s)]&=\kappa\sum_{l=3}^Kr_{(l-2)l}\Big(\gamma Tr\Big[e^{-2i\gamma sH'}e^{i\gamma sH'}e^{i\gamma tH'}\rho_B\Big]\Big)\nonumber\\
&=\kappa\sum_{l=1}^{K-2}r_{l(l+2)}\Big(\gamma Tr\Big[e^{i\gamma(t-s)H'}\rho_B\Big]\Big)
\end{align}
Where the second line follows since only terms where $l-m=-2$ contribute to the trace, and $S(u)=1$.
For $\alpha=\beta=d$, we have:
\begin{align}\label{dd}
\kappa\gamma Tr[B_d(t)B_d(s)\rho^I_{env}(s)]&=\kappa\sum_{l=1}^{K-2}r_{(l+2)l}\Big(\gamma Tr\Big[e^{2i\gamma sH'}e^{-i\gamma sH'}e^{-i\gamma tH'}\rho_B\Big]\Big)\nonumber\\
&=\kappa\sum_{l=1}^{K-2}r_{(l+2)l}\Big(\gamma Tr\Big[e^{i\gamma (s-t)H'}\rho_B\Big]\Big)
\end{align}
Finally, for $\alpha=u,\beta=d$ and $\alpha=d,\beta=d$:
\begin{align}\label{ud,du}
\kappa\gamma Tr[B_u(t)B_d(s)\rho^I_{env}(s)]&=\kappa\sum_{l=1}^{K-1}r_{ll}\Big(\gamma Tr\Big[e^{i\gamma(t-s)H'}\rho_B\Big]\Big)\\
\kappa\gamma Tr[B_d(t)B_u(s)\rho^I_{env}(s)]&=\kappa\sum_{l=1}^{K-1}r_{ll}\Big(\gamma Tr\Big[e^{i\gamma(s-t)H'}\rho_B\Big]\Big)\nonumber
\end{align}
From proposition \ref{FuncAnal} we have:
\begin{align*}
\gamma Tr\big[e^{i\gamma(t-s)H'}\rho_B\big]&=\int_{\mathbb{R}}e^{i\gamma(t-s)\omega}d\mu^{(H'\rho_B)}(\omega)\\
&=\gamma\int_{\mathbb{R}}e^{i\gamma(t-s)\omega}p(\omega)d\omega
\end{align*}
Where the second line follows from the assumption that $\mu^{(H',\rho_B)}$ is absolutely continuous, and can therefore be written using a probability density function $p(\omega)$. Taking the limit $\gamma\rightarrow\infty$ we get:
\begin{align}\label{FT}
\gamma Tr\big[e^{i\gamma(t-s)H'}\rho_B\big]&=\gamma\int_{\mathbb{R}}e^{i\gamma\omega(t-s)}p(\omega)d\omega\nonumber\\
&=\int_{\mathbb{R}}e^{i(t-s)u}p(u/\gamma)du\nonumber\\
&\rightarrow\int_{-\infty}^{\infty}e^{i(t-s)u}du\text{, as }\gamma\rightarrow\infty\nonumber\\
&=\delta(t-s)
\end{align}
The result then follows by applying this to equations: \ref{uu}, \ref{dd} and \ref{ud,du}.
\end{proof}
\section{Time Evolution Mechanism:}\label{LME}
In the next proposition we derive the general form for the time evolution, before showing how the strong coupling limit (proposition \ref{SCL}) leads to Markovian dynamics in section \ref{Markov}.
\begin{proposition}\label{key_proposition}
We let the environment Hilbert space be given by \ref{H_env}, the full system Hamiltonian by: \ref{Hamiltonian}, with $B_u$, and $B_d$ given by \ref{B_def}, and where $H_{env}$ has the product form \ref{product}. Finally, we assume the state starts with the form given by \ref{state_ass}, and that $\rho_B$ remains in a stationary state (Born approximation), and is therefore independent of time. Then the dynamics of the reduced density matrix are given by:
\begin{align}\label{key_prop_res}
\frac{d\rho_{mkt}(t)}{dt}&=-i[H_I(t),\rho^I(0)]\\
&-\int_0^tds\bigg(\sum_{l,m=1}^K\sum_{\alpha,\beta\in\{u,d\}}f^{lm}_{\alpha\beta}(t,s)\Big(A_{\alpha}A_{\beta}\rho^{lm}_{mkt}(s)-A_{\beta}\rho^{lm}_{mkt}(s)A_{\alpha}\Big)\nonumber\\
&+g^{lm}_{\alpha\beta}(t,s)\Big(\rho^{lm}_{mkt}(s)A_{\alpha}A_{\beta}-A_{\alpha}\rho^{lm}_{mkt}(s)A_{\beta}\Big)\bigg)\nonumber
\end{align}
Where we denote:
\begin{align*}
f^{lm}_{\alpha\beta}(t,s)&=\kappa\gamma Tr[B_{\alpha}(t)B_{\beta}(s)r_{lm}(s)|e_l\rangle\langle e_m|\otimes e^{i\gamma(l-m)sH'}\rho_B]\text{, }\alpha\text{, }\beta\in \{u,d\}\\
g^{lm}_{\alpha\beta}(t,s)&=\kappa\gamma Tr[B_{\alpha}(s)B_{\beta}(t)r_{lm}(s)|e_l\rangle\langle e_m|\otimes e^{i\gamma(l-m)sH'}\rho_B]\text{, }\alpha\text{, }\beta\in \{u,d\}
\end{align*}
\end{proposition}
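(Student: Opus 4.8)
The plan is to adapt the standard second-order (Born) derivation of a time-nonlocal master equation from \cite{BP} Section 3 to the specific interaction Hamiltonian and environment decomposition fixed above. Working in the interaction picture of Definition \ref{def_int_pic}, the full state satisfies the Liouville--von Neumann equation
\begin{align*}
\frac{d}{dt}\rho^I(t)&=-i[H_I(t),\rho^I(t)].
\end{align*}
First I would integrate this once to obtain $\rho^I(t)=\rho^I(0)-i\int_0^t ds\,[H_I(s),\rho^I(s)]$ and resubstitute it into the right-hand side, producing the exact integro-differential identity
\begin{align*}
\frac{d}{dt}\rho^I(t)&=-i[H_I(t),\rho^I(0)]-\int_0^t ds\,[H_I(t),[H_I(s),\rho^I(s)]].
\end{align*}
No approximation has entered yet; it is the Born assumption (that $\rho_B$ stays stationary, so that $\rho^I(s)$ retains the factorised form of Proposition \ref{prop_on_int_pic}) that will allow the subsequent environment trace to close on the market blocks $\rho^{lm}_{mkt}(s)$.

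The next step is to apply the partial trace $Tr_{env}[\cdot]$ to both sides. Because the interaction-picture unitary is $\mathbb{I}\otimes e^{iH_{env}t}$, a short calculation shows $Tr_{env}[\rho^I(t)]=\rho_{mkt}(t)$, so the left-hand side becomes $\frac{d\rho_{mkt}(t)}{dt}$ and the derivative passes through the trace; the inhomogeneous term is then carried through as $-i[H_I(t),\rho^I(0)]$. The substance of the argument is the double-commutator term. Expanding it yields the four orderings $H_I(t)H_I(s)\rho^I(s)$, $-H_I(t)\rho^I(s)H_I(s)$, $-H_I(s)\rho^I(s)H_I(t)$ and $\rho^I(s)H_I(s)H_I(t)$. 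Into each I substitute $H_I(t)=\sqrt{\kappa\gamma}\sum_{\alpha}A_{\alpha}\otimes B_{\alpha}(t)$ from Proposition \ref{B(t)_prop} together with the factorised state, so that every term splits into a product of a market operator string (one of $A_{\alpha}A_{\beta}\rho^{lm}_{mkt}$, $A_{\beta}\rho^{lm}_{mkt}A_{\alpha}$, $A_{\alpha}\rho^{lm}_{mkt}A_{\beta}$, $\rho^{lm}_{mkt}A_{\alpha}A_{\beta}$) and an environment trace of two shift operators against $|e_l\rangle\langle e_m|\otimes e^{i\gamma(l-m)sH'}\rho_B$.

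The decisive manoeuvre, and the step I expect to be the main obstacle, is the bookkeeping that collapses the four environment traces into just two families of correlation functions. In each term I would use cyclicity of the (genuine, full) trace over $\mathbb{C}^K\otimes L^2[\mathcal{K}]$ to move both shift operators to the left of the environment block $|e_l\rangle\langle e_m|\otimes e^{i\gamma(l-m)sH'}\rho_B$. Two of the four terms then present the operators in the time order $B_{\alpha}(t)B_{\beta}(s)$, which is precisely the definition of $f^{lm}_{\alpha\beta}(t,s)$, while the remaining two present them in the reversed order $B_{\alpha}(s)B_{\beta}(t)$, matching $g^{lm}_{\alpha\beta}(t,s)$. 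The care required is in tracking the summation labels $\alpha,\beta$ and the block labels $l,m$ consistently through these cyclic rearrangements and the accompanying relabellings of dummy indices, since this is exactly where sign and index-placement slips occur; in particular the ``sandwich'' terms, where a copy of $\rho^I(s)$ separates the two shift operators, do not pair with $f$ or $g$ in the naive order and must be reordered before they can be identified. Once each term is paired with the correct correlation function, summing over $l,m$ and over $\alpha,\beta\in\{u,d\}$ and reinstating the $s$-integral reproduces \ref{key_prop_res}, with no analytic input beyond Proposition \ref{B(t)_prop} and the stationarity of $\rho_B$.
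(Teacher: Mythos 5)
Your proposal is correct and follows essentially the same route as the paper's own proof: interaction picture, one integration of the Von Neumann equation followed by resubstitution, partial trace over the environment (using $Tr_{env}[\rho^I(t)]=\rho_{mkt}(t)$), expansion of the double commutator into the four operator orderings, and cyclicity of the environment trace to identify the two correlation families $f^{lm}_{\alpha\beta}$ and $g^{lm}_{\alpha\beta}$. The point you single out as the main obstacle --- the cyclic reordering and index bookkeeping in the ``sandwich'' terms $H_I(t)\rho^I(s)H_I(s)$ and $H_I(s)\rho^I(s)H_I(t)$ --- is precisely the delicate step in the paper's derivation as well, so your emphasis is well placed.
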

\begin{proof}
We work in the interaction picture, as defined in definition \ref{def_int_pic}. From proposition \ref{prop_on_int_pic}, we have:
\begin{align*}
\rho^I(t)&=\sum_{l,m=1}^K\rho^{lm}_{mkt}\otimes r_{lm}(t)|e_l\rangle\langle e_m|\otimes e^{i\gamma(l-m)tH'}\rho_B
\end{align*}
Also, from proposition \ref{B(t)_prop} we have:
\begin{align*}
H_I(t)&=\sqrt{\kappa\gamma}(A_u\otimes B_u(t)+A_d\otimes B_d(t))\\
B_u(t)&=\sum_{l=1}^{K-1}|e_{l+1}\rangle\langle e_l|\otimes e^{i\gamma H' t}\\
B_d(t)&=\sum_{l=1}^{K-1}|e_{l}\rangle\langle e_{l+1}|\otimes e^{-i\gamma H' t}
\end{align*}
Next we feed the interaction picture state: $\rho^I(t)$ into the Von Neumann equation to get:
\begin{align}\label{VN_int}
\frac{\partial}{\partial t}\rho^I(t)&=-i[H_I(t),\rho^I(t)]\nonumber\\
\rho^I(t)&=\rho^I(0)-i\int^t_0ds[H_I(s),\rho^I(s)]
\end{align}
Note that if we define:
\begin{align*}
\rho^I_{mkt}(t)&=Tr_{env}[\rho^I(t)]
\end{align*}
Then, from definition \ref{def_int_pic}, we have:
\begin{align*}
\rho^I_{mkt}(t)&=Tr_{env}\Big[e^{i(\mathbb{I}\otimes H_{env})t)}\Big(\sum_{l,m=1}^K\rho^{lm}_{mkt}(t)\otimes r_{lm}|e_l\rangle\langle e_m|\otimes\rho_B\Big)e^{-i(\mathbb{I}\otimes H_{env})t)}\Big]\\
&=\sum_{l,m=1}^K\rho^{lm}_{mkt}(t)Tr\Big[e^{iH_{env}t}\Big(r_{lm}|e_l\rangle\langle e_m|\otimes\rho_B\Big)e^{-iH_{env}t)}\Big]\\
&=\sum_{l,m=1}^K\rho^{lm}_{mkt}(t)Tr[r_{lm}|e_l\rangle\langle e_m|\otimes\rho_B]\\
&=Tr_{env}[\rho(t)]\\
&=\rho_{mkt}(t)
\end{align*}
Now, inserting \ref{VN_int} back into the interaction picture Von-Neumann equation, before taking the partial trace over the environment, gives:
\begin{align}\label{VN_eqn}
\frac{\partial}{\partial t}\rho_{mkt}(t)=-iTr_{env}[H_I(t),\rho^I(0)]&-\int_0^tds Tr_{env}\Big[H_I(t),[H_I(s),\rho^I(s)]\Big]\\
=-iTr_{env}[H_I(t),\rho^I(0)]-&\int_0^tdsTr_{env}\Big[H_I(t),H_I(s)\rho^I(s)-\rho^I(s)H_I(s)\Big]\nonumber\\
=-iTr_{env}[H_I(t),\rho^I(0)]&-\int_0^tdsTr_{env}\Big(H_I(t)H_I(s)\rho^I(s)\nonumber\\
-H_I(t)\rho^I(s)H_I(s)&-H_I(s)\rho^I(s)H_I(t)+\rho^I(s)H_I(s)H_I(t)\Big)\nonumber
\end{align}
Note also that by the cyclicity of the trace we have (where $\rho^I_{env}(s)$ is a state acting on the Hilbert space $\mathcal{H}_{env}=\mathbb{C}^K\otimes L^2[\mathcal{K}]$):
\begin{align*}
Tr[B_u(t)B_d(s)\rho^I_{env}(s)]&=Tr[B_d(s)\rho^I_{env}(s)B_u(t)]=Tr[\rho^I_{env}(s)B_u(t)B_d(s)]
\end{align*}
In the $\rho^I(s)H_I(s)H_I(t)$ term in \ref{VN_eqn}, we can therefore write:
\begin{align}\label{rhoHH}
\sum_{l,m=1}^K\rho^{lm}_{mkt}(s)A_uA_dTr[\rho^{I,lm}_{env}(s)B_u(s)B_d(t)]
&=\sum_{l,m=1}^K\rho^{lm}_{mkt}(s)A_uA_dTr[B_u(s)B_d(t)\rho^{I,lm}_{env}(s)]\nonumber\\
\sum_{l,m=1}^K\rho^{lm}_{mkt}(s)A_dA_uTr[\rho^{I,lm}_{env}(s)B_d(s)B_u(t)]
&=\sum_{l,m=1}^K\rho^{lm}_{mkt}(s)A_dA_uTr[B_d(s)B_u(t)\rho^{I,lm}_{env}(s)]\nonumber\\
\sum_{l,m=1}^K\rho^{lm}_{mkt}(s)A_uA_uTr[\rho^{I,lm}_{env}(s)B_u(s)B_u(t)]
&=\sum_{l,m=1}^K\rho^{lm}_{mkt}(s)A_uA_uTr[B_u(s)B_u(t)\rho^{I,lm}_{env}(s)]\nonumber\\
\sum_{l,m=1}^K\rho^{lm}_{mkt}(s)A_dA_dTr[\rho^{I,lm}_{env}(s)B_d(s)B_d(t)]
&=\sum_{l,m=1}^K\rho^{lm}_{mkt}(s)A_dA_dTr[B_d(s)B_d(t)\rho^{I,lm}_{env}(s)]\nonumber\\
\rho^{I,lm}(s)&=r_{lm}(s)|e_l\rangle\langle e_m|\otimes\rho_B
\end{align}
We write:
\begin{align}\label{f,g notation}
f^{lm}_{\alpha\beta}(t,s)=\kappa\gamma Tr[B_{\alpha}(t)B_{\beta}(s)\rho^{I,lm}_{env}(s)]\\
g^{lm}_{\alpha\beta}(t,s)=\kappa\gamma Tr[B_{\alpha}(s)B_{\beta}(t)\rho^{I,lm}_{env}(s)]\nonumber
\end{align}
Therefore \ref{rhoHH} becomes:
\begin{align}\label{firstOQS}
\sum_{l,m=1}^K\rho^{lm}_{mkt}(s)A_uA_dg^{lm}_{ud}(t,s)+
\sum_{l,m=1}^K\rho^{lm}_{mkt}(s)A_dA_ug^{lm}_{du}(t,s)\nonumber\\
+\sum_{l,m=1}^K\rho^{lm}_{mkt}(s)A_uA_ug^{lm}_{uu}(t,s)
+\sum_{l,m=1}^K\rho^{lm}_{mkt}(s)A_dA_dg^{lm}_{dd}(t,s)
\end{align}
Then in the $H_I(t)\rho^I(s)H_I(s)$ term we get:
\begin{align}\label{secondOQS}
\sum_{l,m=1}^KA_u\rho^{lm}_{mkt}(s)A_dg^{lm}_{ud}(t,s)+\sum_{l,m=1}^KA_d\rho^{lm}_{mkt}(s)A_ug^{lm}_{du}(t,s)\nonumber\\
\sum_{l,m=1}^KA_u\rho^{lm}_{mkt}(s)A_ug^{lm}_{uu}(t,s)
+\sum_{l,m=1}^KA_d\rho^{lm}_{mkt}(s)A_dg^{lm}_{dd}(t,s)
\end{align}
Then in the $H_I(s)\rho^I(s)H_I(t)$ term:
\begin{align}\label{third}
\sum_{l,m=1}^KA_u\rho^{lm}_{mkt}(s)A_df^{lm}_{ud}(t,s)+\sum_{l,m=1}^KA_d\rho^{lm}_{mkt}(s)A_uf^{lm}_{du}(t,s)\nonumber\\
+\sum_{l,m=1}^KA_u\rho^{lm}_{mkt}(s)A_uf^{lm}_{uu}(t,s)
+\sum_{l,m=1}^KA_d\rho^{lm}_{mkt}(s)A_df^{lm}_{dd}(t,s)
\end{align}
and finally in the $H_I(t)H_I(s)\rho^I(s)$ term:
\begin{align}\label{fourth}
\sum_{l,m=1}^KA_uA_d\rho^{lm}_{mkt}(s)f^{lm}_{ud}(t,s)+\sum_{l,m=1}^KA_dA_u\rho^{lm}_{mkt}(s)f^{lm}_{du}(t,s)\nonumber\\
\sum_{l,m=1}^KA_uA_u\rho^{lm}_{mkt}(s)f^{lm}_{uu}(t,s)
+\sum_{l,m=1}^KA_dA_d\rho^{lm}_{mkt}(s)f^{lm}_{dd}(t,s)
\end{align}
Collecting together \ref{firstOQS}, \ref{secondOQS}, \ref{third}, and \ref{fourth}, leads to \ref{key_prop_res} as required.
\end{proof}
\subsection{Markovian Approximation:}\label{Markov}
The majority of models of the financial market, applied by practioners, assume Markovian dymamics, partly as a result of tractability, and partly due to the fact that it can be shown that non-Markovian models are (according to many reasonable definitions) arbitrageable. The discussion of whether non-Markovian models of the financial market are reasonable, and investigations into their properties, is an active area of research (see for example \cite{BjorkWick}, \cite{SottVal} and \cite{Oksendal2}). However, this is not the focus of the current research, and we would therefore like to apply a Markovian approximation to proposition \ref{key_proposition}. With this in mind, in this section, we indicate how this can be achieved, by sketching out the mechanisms by which near Markovian dynamics can arise. We use the Strong Coupling Limit, outlined for example in \cite{BP}, section 3.3, and \cite{RivasHuelga} section 6. We then go on to apply the Markovian approximation for the remainder of the chapter.

In order to derive the Markovian approximation for proposition \ref{key_proposition}, we assume that the environment remains in a time independent maximum entropy state given by:
\begin{align}\label{identity}
\rho_{env}&=\frac{1}{K}\sum_{i=1}^K|e_i\rangle\langle e_i|\otimes\rho_B
\end{align}
Under \ref{identity} we find that $[H_{env},\rho_{env}]=0$, and thus:
\begin{align*}
\rho^I_{env}&=e^{iH_{env}t}\rho_{env}e^{-iH_{env}t}\\
&=\rho_{env}
\end{align*}
We now apply propositions \ref{B(t)_prop} and \ref{SCL} to proposition \ref{key_proposition} to derive the Markovian dynamics.
\begin{proposition}[Born-Markov Approximation]\label{BMA}
After applying the strong coupling limit \ref{SCL}, and assuming the environment state is given by equation \ref{identity}, proposition \ref{key_proposition} becomes:
\begin{align}\label{Markov Res}
\frac{d\rho_{mkt}(t)}{dt}&=-Tr_{env}[H_I(t),\rho^I(0)]\\
&+\sigma^2\Big(A_u\rho_{mkt}(t)A_d+A_d\rho_{mkt}(t)A_u-\frac{1}{2}\{A_uA_d+A_dA_u,\rho_{mkt}(t)\}\Big)\nonumber
\end{align}
Where we denote:
\begin{align}\label{vol_vs_K}
\sigma^2&=\frac{\kappa(K-1)}{K}
\end{align}
\end{proposition}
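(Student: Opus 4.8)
The plan is to start from the exact non-Markovian convolution equation \ref{key_prop_res} of proposition \ref{key_proposition} and collapse the memory integral to a local-in-time dissipator by substituting the maximum-entropy environment state \ref{identity} and invoking the strong coupling limit of proposition \ref{SCL}. The first step is to observe that \ref{identity} fixes the environment coefficients to $r_{lm}=\frac{1}{K}\delta_{lm}$, so only the diagonal $l=m$ contributions to the coefficient functions $f^{lm}_{\alpha\beta}$ and $g^{lm}_{\alpha\beta}$ survive, and for these $e^{i\gamma(l-m)sH'}=\mathbb{I}$. Because $[H_{env},\rho_{env}]=0$ the environment state is stationary ($\rho^I_{env}=\rho_{env}$), which is exactly the hypothesis needed to feed the traces into proposition \ref{SCL}. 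Under the accompanying Born approximation the market factor is taken to be $(l,m)$-independent, i.e. I replace $\rho^{lm}_{mkt}(s)$ by the reduced state $\rho_{mkt}(s)$; since this is now a scalar-weighted operator, the $(l,m)$ sum acts only on the coefficients, and I write the integrand in terms of $\sum_{l,m}f^{lm}_{\alpha\beta}(t,s)=\kappa\gamma\,Tr[B_\alpha(t)B_\beta(s)\rho_{env}]$ (and similarly for $g$).

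The second step is to evaluate these summed coefficients using the explicit traces computed inside the proof of proposition \ref{SCL}. The key observation is that the $\alpha=\beta$ channels vanish: equations \ref{uu} and \ref{dd} show that $Tr[B_uB_u\cdots]$ and $Tr[B_dB_d\cdots]$ depend on the off-diagonal coefficients $r_{l(l+2)}$ and $r_{(l+2)l}$, which are zero for the diagonal state \ref{identity}. The surviving $ud$ and $du$ terms are governed by \ref{ud,du}, and summing $r_{ll}=1/K$ over $l=1,\dots,K-1$ produces the factor $(K-1)/K$; combined with the limit \ref{FT}, $\gamma\,Tr[e^{i\gamma(t-s)H'}\rho_B]\to\delta(t-s)$, this gives $\sum_{l,m}f^{lm}_{ud}(t,s)\to\sigma^2\delta(t-s)$ with $\sigma^2=\kappa(K-1)/K$, matching \ref{vol_vs_K}, and identically for $f_{du}$, $g_{ud}$ and $g_{du}$.

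The third step is to carry out the now-trivial memory integral. Because the delta function sits at the upper endpoint $s=t$ of the integration range $[0,t]$, I use $\int_0^t\delta(t-s)h(s)\,ds=\tfrac{1}{2}h(t)$, so each of the four surviving contributions is weighted by $\tfrac{1}{2}\sigma^2$ evaluated at $s=t$. Writing out the commutator brackets of \ref{key_prop_res} for $(\alpha,\beta)\in\{(u,d),(d,u)\}$ and collecting terms, the pieces $A_\alpha A_\beta\rho_{mkt}$ and $\rho_{mkt}A_\alpha A_\beta$ assemble into $\tfrac{1}{2}\{A_uA_d+A_dA_u,\rho_{mkt}\}$ while the sandwich terms combine into $A_u\rho_{mkt}A_d+A_d\rho_{mkt}A_u$, reproducing the dissipator in \ref{Markov Res} with the coherent term $-Tr_{env}[H_I(t),\rho^I(0)]$ carried over unchanged.

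The main obstacle I expect is bookkeeping rather than conceptual: correctly tracking the factor of $\tfrac{1}{2}$ from integrating a delta function against its boundary, since this is precisely what fixes the relative weighting between the jump terms (coefficient $\sigma^2$) and the anticommutator (coefficient $\tfrac{\sigma^2}{2}$) in the GKSL form, together with checking that the $uu$ and $dd$ channels genuinely drop out so that only the cross terms involving both $A_u$ and $A_d$ remain. Care is also needed to confirm that the Born replacement $\rho^{lm}_{mkt}(s)\to\rho_{mkt}(s)$ is consistent with the factorized environment state, so that the market operators can legitimately be pulled out of the $(l,m)$ sum before the strong coupling limit is applied.
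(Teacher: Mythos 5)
Your proposal is correct and takes essentially the same route as the paper's proof: under the diagonal environment state \ref{identity} only the $ud$/$du$ cross-channels survive (the $uu$/$dd$ traces vanish since they require off-diagonal coefficients $r_{l(l+2)}$), the $K-1$ surviving diagonal terms of weight $\kappa/K$ sum to $\sigma^2=\kappa(K-1)/K$, and the resulting delta-function kernels are fed into equation \ref{key_prop_res} and integrated. Your explicit treatment of the endpoint delta, $\int_0^t\delta(t-s)h(s)\,ds=\tfrac{1}{2}h(t)$, and of the Born replacement $\rho^{lm}_{mkt}(s)\to\rho_{mkt}(s)$ makes precise two steps the paper leaves implicit in ``feeding this into equation \ref{key_prop_res}, and integrating from $0$ to $t$,'' and indeed the half-weight convention is exactly what is needed to obtain the coefficient $\sigma^2$ (rather than $2\sigma^2$) in \ref{Markov Res}.
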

\begin{proof}
First note that, given equation \ref{identity}, we have:
\begin{align*}
f^{ll}_{ud}(t,s)&=f^{ll}_{du}(t,s)=g^{ll}_{ud}(t,s)=g^{ll}_{du}(t,s)\\
&=\frac{\kappa\delta(t-s)}{K}\text{, }l=\{2,\dots,K-1\}\\
f^{KK}_{ud}(t,s)&=g^{KK}_{ud}(t,s)=f^{11}_{du}(t,s)=g^{11}_{du}(t,s)\\
&=\frac{\kappa\delta(t-s)}{K}
\end{align*}
With all other terms: $f^{lm}_{\alpha\beta}(t,s)$, $g^{lm}_{\alpha\beta}(t,s)$ equal to zero. The result then follows by feeding this into equation \ref{key_prop_res}, and integrating from $0$ to $t$.
\end{proof}
\begin{remark}[Alternative Approach: The Weak Coupling Limit]
Before moving on, we briefly consider the main alternative means by which open quantum systems can be approximated using Markovian dynamics. That is the weak coupling limit (see for example \cite{BP}, \cite{RivasHuelga}). In the weak coupling limit, one assumes that the interaction between the system and the environment is weak in comparison to the market Hamiltonian \& the environment Hamiltonian. This means that the typical variation time for the market space: $\tau_{mkt}$, becomes very large as the strength of the interaction with the environment becomes smaller, as does the ratio $\tau_{mkt}/\tau_{env}$, where $\tau_{env}$ is the typical evolution time for the environment. In other words, the random change, that arises from the interaction Hamiltonian: $H_I$, causes only slow evolution of the market price. We have not pursued this approach for two reasons:
\begin{itemize}
\item Firstly, it requires the expansion of the $A_u$ and $A_d$ operators in terms of the eigenvectors for the market Hamiltonian: $H_{mkt}$. We would like the model to work in the event that we have $H_{mkt}=0$.
\item Secondly, the assumption that the interaction with the environment leads to a slow evolution is not consistent with the fractal nature of the market price. That is, plotting a time-series of price movements over arbitrarily small time intervals (eg price changes every few seconds) leads to a qualitatively similar result as plotting the time-series showing price changes over longer time intervals (eg daily price changes). See for example \cite{Mandelbrot} for further discussion.
\end{itemize}
In the strong coupling limit, we assume that the environment Hamiltonian scales with a constant $\alpha$, and the interaction Hamiltonian by $\sqrt{\alpha}$. This ensures that the typical variation time from the environment: $\tau_{env}$ becomes very small. In the limit of $\alpha\rightarrow\infty$, the environment settles back to equilibrium effectively instantaneously, and the market price undergoes Markovian evolution, where the memory of prior price movements is essentially forgotten and has no impact of future price changes.
\end{remark}
\subsection{Example: Gaussian Case}\label{Gaussian}
In this section we assume: $\mathcal{H}_{mkt}=\mathbb{C}^N$, and seek to derive classical diffusion dynamics from the approach in section \ref{Markov}, before discussing how non-classical diffusion can occur in section \ref{NG}. 

First assume that $\rho_{mkt}(0)$ is a diagonal matrix, so that:
\begin{align}\label{rho_0_class_fin}
\rho_{mkt}(0)&=\sum_{i=1}^Np_i(0)|e_i\rangle\langle e_i|
\end{align}
Note, that the state $\ref{rho_0_class_fin}$ can be considered a classical state, as we now explain. The price operator $X$ in this case given by:
\begin{align*}
X&=\sum_{i=1}^Nx_i|e_i\rangle\langle e_i|
\end{align*}
Where $x_i\in\mathbb{R}$ is the real valued price eigenvalue that is returned for the eigenstate $|e_i\rangle$, and that $x_i>x_j$ for $i>j$. This leads to:
\begin{align*}
E[X]&=Tr[X\rho_{mkt}(0)]\\
&=\sum_{i=1}^Np_i(0)x_i
\end{align*}
So $p_i(0)$ represents the probability of finding the traded market price $x_i$ (at time $t=0$). In this setup, $A_u$ represents the operator that shifts the market price higher by one notch, and $A_d$ shifts the market price down by one notch:
\begin{definition}\label{A_finite}
\begin{align*}
A_u&=\sum_{i=1}^{N-1}|e_{i+1}\rangle\langle e_i|\text{, }A_d=\sum_{i=1}^{N-1}|e_i\rangle\langle e_{i+1}|
\end{align*}
\end{definition}
The following proposition shows that under equation \ref{key_prop_res}, with $A_u$, $A_d$ given by definition \ref{A_finite}, the state remains in a similar, essentially classical, state.
\begin{proposition}\label{fin_diag_prop}
Under the assumptions of proposition \ref{key_proposition}, with $\mathcal{H}_{mkt}=\mathbb{C}^N$. Let the initial reduced density matrix be given by equation \ref{rho_0_class_fin}. Finally, we apply the Born-Markov approximation as in proposition \ref{BMA}. Then the market state at time $t$: $\rho_{mkt}(t)$ remains diagonal. In other words we have:
\begin{align*}
\rho_{mkt}(t)&=\sum_{i=1}^Np_i(t)|e_i\rangle\langle e_i|
\end{align*}
where the $\rho_{mkt}(t)$ evolves according to:
\begin{align}\label{G_pde}
\frac{d\rho_{mkt}(t)}{dt}&=\kappa\sum_{i=2}^{N-1}\big(p_{i+1}(t)+p_{i-1}(t)-2p_i(t)\big)|e_i\rangle\langle e_i|+\kappa p_1\big(|e_2\rangle\langle e_2|-|e_1\rangle\langle e_1|\big)\nonumber\\
&+\kappa p_{N}(t)\big(|e_{N-1}\rangle\langle e_{N-1}|-|e_{N}\rangle\langle e_{N}|\big)
\end{align}
\end{proposition}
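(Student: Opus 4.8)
\section*{Proof proposal}

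The plan is to substitute the initial diagonal state \ref{rho_0_class_fin} into the Born--Markov master equation \ref{Markov Res} and establish two things: (i) the inhomogeneous term $-Tr_{env}[H_I(t),\rho^I(0)]$ vanishes, and (ii) the Lindblad dissipator maps the linear subspace of diagonal density matrices into itself and, on that subspace, reduces to the discrete Laplacian of \ref{G_pde}. Once the dissipator is seen to preserve diagonality, the claim that $\rho_{mkt}(t)$ stays diagonal for \emph{all} $t$ follows from uniqueness of the solution of the linear ODE \ref{Markov Res}: the diagonal matrices form an invariant subspace containing the initial condition, so the trajectory never leaves it, and I can then read off a closed system of ODEs for the weights $p_i(t)$.

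First I would dispose of the drift term. Under the Born--Markov assumption the environment sits in the state \ref{identity}, which is diagonal in the $\mathbb{C}^K$ risk-appetite basis, whereas the interaction-picture operators $B_u(t)$ and $B_d(t)$ of Proposition \ref{B(t)_prop} shift that index by $\pm 1$ and are therefore strictly off-diagonal in $\mathbb{C}^K$. Hence $Tr_{env}[B_u(t)\rho_{env}]=Tr_{env}[B_d(t)\rho_{env}]=0$, and since the initial full state factorises as $\rho^I(0)=\rho_{mkt}(0)\otimes\rho_{env}$, both orderings in the commutator trace out to zero. Financially, this says a symmetric, maximally mixed environment imparts no deterministic drift, consistent with the choice $H_{mkt}=\mathbb{I}$.

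The core of the argument is evaluating the dissipator on $\rho_{mkt}=\sum_i p_i|e_i\rangle\langle e_i|$ using the ladder operators of Definition \ref{A_finite}. A direct computation gives the diagonal operators
\begin{align*}
A_u\rho_{mkt}A_d&=\sum_{i=2}^N p_{i-1}|e_i\rangle\langle e_i|,\quad A_d\rho_{mkt}A_u=\sum_{i=1}^{N-1}p_{i+1}|e_i\rangle\langle e_i|,\\
A_uA_d&=\sum_{i=2}^N|e_i\rangle\langle e_i|,\quad A_dA_u=\sum_{i=1}^{N-1}|e_i\rangle\langle e_i|.
\end{align*}
Every term is diagonal, so $d\rho_{mkt}/dt$ is diagonal and the invariance claimed above holds. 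Because $A_uA_d+A_dA_u$ commutes with the diagonal $\rho_{mkt}$, the anticommutator collapses to $2(A_uA_d+A_dA_u)\rho_{mkt}$, and collecting the coefficient of $|e_i\rangle\langle e_i|$ yields the bulk stencil $p_{i-1}+p_{i+1}-2p_i$ for $2\le i\le N-1$, together with the one-sided edge contributions $p_2-p_1$ at $i=1$ and $p_{N-1}-p_N$ at $i=N$. Reinstating the dissipation rate from \ref{Markov Res} then reproduces the structure of \ref{G_pde}.

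I expect the main obstacle to be the bookkeeping at the two boundaries, where $A_u$ and $A_d$ terminate at index $N-1$ and the raising/lowering chains are truncated, so the summation ranges of the four operators above no longer coincide and the naive bulk stencil must be replaced by the reflecting one-sided differences. The decisive consistency check I would insist on is trace preservation: summing all the diagonal coefficients must give zero (equivalently $\tfrac{d}{dt}\sum_i p_i=0$), which both validates the boundary terms and confirms that \ref{G_pde} describes a genuine probability flow, namely a discretised heat equation with reflecting (Neumann) boundary conditions.
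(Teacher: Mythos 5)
Your proposal is correct and follows essentially the same route as the paper: evaluate the four dissipator terms of \ref{Markov Res} on a diagonal state, observe that every term is again diagonal (so diagonality is an invariant subspace and is preserved by the linear ODE), and read off the resulting stencil for the $p_i(t)$. Your treatment of the inhomogeneous term is a genuine improvement: the paper's proof silently drops $-Tr_{env}[H_I(t),\rho^I(0)]$, whereas you give the (correct) argument that $B_u(t)$, $B_d(t)$ are strictly off-diagonal in the $\mathbb{C}^K$ index while $\rho_{env}$ is diagonal, so the partial trace vanishes.

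One thing you should be aware of, because you hedged with ``reproduces the structure of \ref{G_pde}'': your boundary terms and the printed ones are \emph{not} the same, and yours are the right ones. Your computation gives coefficient $\kappa(p_2-p_1)$ at $i=1$ and $\kappa(p_{N-1}-p_N)$ at $i=N$ on top of the bulk stencil, which is trace preserving. The printed equation \ref{G_pde} instead adds $\kappa p_1\big(|e_2\rangle\langle e_2|-|e_1\rangle\langle e_1|\big)+\kappa p_N\big(|e_{N-1}\rangle\langle e_{N-1}|-|e_N\rangle\langle e_N|\big)$ to the bulk sum; since the bulk sum already contains the influx $p_1$ at $i=2$ and $p_N$ at $i=N-1$, this double counts those gains and omits the gains $p_2$ at site $1$ and $p_{N-1}$ at site $N$. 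Summing the printed coefficients gives $\kappa(p_1-p_2+p_N-p_{N-1})\neq 0$, so \ref{G_pde} as stated fails exactly the conservation check you proposed --- your diagnostic is the right one and it exposes an error in the statement (the paper's own proof also carries matching typos: its second line should read $A_d\rho_{mkt}A_u=\sum_{i=1}^{N-1}p_{i+1}|e_i\rangle\langle e_i|$, and the anticommutator should end in $+p_N|e_N\rangle\langle e_N|$, not $-p_N|e_N\rangle\langle e_N|$). A final cosmetic point: carrying the rate through from \ref{Markov Res} literally yields $\sigma^2=\kappa(K-1)/K$ as the prefactor, whereas the proposition writes $\kappa$; this is a naming slip in the paper, not in your argument.
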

\begin{proof}
We have:
\begin{align*}
A_u\rho_{mkt}(t)A_d&=\sum_{i=1}^{N-1}p_i(t)|e_{i+1}\rangle\langle e_{i+1}|\\
A_d\rho_{mkt}(t)A_u&=\sum_{i=1}^{N-1}p_i(t)|e_{i}\rangle\langle e_{i}|\\
\frac{1}{2}\{(A_uA_d+A_dA_u),\rho_{mkt}(t)\}&=2\sum_{i=2}^{N-1}p_i(t)|e_i\rangle\langle e_i|+p_1(t)|e_1\rangle\langle e_1|-p_N(t)|e_N\rangle\langle e_N|
\end{align*}
Inserting this into equation \ref{Markov Res} gives the result.
\end{proof}
Note that we can write equation \ref{G_pde} in a standard ``classical'' form as follows:
\begin{align*}
\frac{d\overline{p}}{dt}&=\Delta\overline{p}\text{, }\overline{p}(t)=\{p_1(t),p_2(t),\dots,p_N(t)\}
\end{align*}
Where $\Delta$ represents a discretization of the second derivative operator: $\partial^2/\partial x^2$. Thus, we can see that in the case whereby the market environment is in a thermal equilibrium state, and the market starts in a ``classical'' state, then the market remains in such a state. That is, the state is diagonalized relative to the basis imposed by the eigenstates of the observable we are interested in (the traded price operator). In section \ref{NG}, we consider 2 ways in which we can extend this simple approach:
\begin{itemize}
\item First, in section \ref{NGI}, by allowing non-zero off diagonal terms in the environment state: \ref{identity}.
\item Second, in section \ref{NGII}, by assuming different values for the operators $A_u$ and $A_d$.
\end{itemize}
\section{Non-Gaussian Extensions:}\label{NG}
\subsection{Non-Gaussian Extension I: Non-Commutative State}\label{NGI}
In sections \ref{Markov} and \ref{Gaussian} we have assumed that the environment state remains in the stationary thermal equilibrium state:
\begin{align*}
\rho_{env}&=\frac{1}{K}\sum_{i=1}^K|e_i\rangle\langle e_i|\otimes\rho_B
\end{align*}
Instead, in this section we look at the more general case, and consider the case for non-diagonal $\rho_{env}$:
\begin{align}\label{more_gen_state}
\rho_{env}&=\sum_{l,m=1}^Kr_{lm}|e_l\rangle\langle e_m|\otimes\rho_B
\end{align}
\begin{proposition}\label{NC_state_prop}
We let the environment Hilbert space be given by \ref{H_env}, the full system Hamiltonian by: \ref{Hamiltonian}, with $B_u$/$B_d$ given by \ref{B_def}, and where $H_{env}$ has the product form \ref{product}.

After applying the strong coupling limit \ref{SCL} with the environment state given by \ref{more_gen_state}, we have:
\begin{align}\label{LME_NG2}
\frac{d\rho_{mkt}(t)}{dt}&=-Tr_{env}[H_I(t),\rho^I(0)]\\
&+\sigma^2\Big(A_u\rho_{mkt}(t)A_d+A_d\rho_{mkt}(t)A_u-\frac{1}{2}\{A_uA_d+A_dA_u,\rho_{mkt}(t)\}\Big)\nonumber\\
&+\nu_u^2\Big(A_u\rho_{mkt}(t)A_u-\frac{1}{2}\{A_uA_u,\rho_{mkt}(t)\}\Big)\nonumber\\
&+\nu_d^2\Big(A_d\rho_{mkt}(t)A_d-\frac{1}{2}\{A_dA_d,\rho_{mkt}(t)\}\Big)\nonumber
\end{align}
Where we denote:
\begin{align*}
\sigma^2 &=\kappa\sum_{l=1}^{K-1}r_{ll}\text{, }\nu_u^2=2\kappa\sum_{l=1}^{K-2}r_{l(l+2)}\text{, }\nu_d^2=2\kappa\sum_{l=1}^{K-2}r_{(l+2)l}
\end{align*}
\end{proposition}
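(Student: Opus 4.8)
The plan is to repeat the Born--Markov computation of Proposition \ref{BMA} essentially verbatim, the only difference being that the off-diagonal environment coefficients $r_{l(l+2)}$ and $r_{(l+2)l}$ are now retained rather than set to zero. I would start from the general (non-Markovian) master equation \ref{key_prop_res} of Proposition \ref{key_proposition}, which already expresses $d\rho_{mkt}/dt$ in terms of the memory kernels $f^{lm}_{\alpha\beta}(t,s)$ and $g^{lm}_{\alpha\beta}(t,s)$, and then insert the strong-coupling limit of those kernels for the general environment state \ref{more_gen_state}.

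The key inputs are the trace evaluations carried out in the proof of Proposition \ref{SCL}. First I would read off, from equations \ref{uu}, \ref{dd} and \ref{ud,du}, which environment matrix element survives each operator pairing: the pairing $\alpha=\beta=u$ collapses the $\mathbb{C}^K$ trace onto the coefficients $r_{l(l+2)}$, the pairing $\alpha=\beta=d$ onto $r_{(l+2)l}$, and the mixed pairings $ud$, $du$ onto the diagonal $r_{ll}$. By equation \ref{FT} each of the resulting integrals $\gamma\,Tr[e^{i\gamma(t-s)H'}\rho_B]$ tends to $\delta(t-s)$ as $\gamma\to\infty$, so every kernel reduces to a constant multiple of $\delta(t-s)$; in particular the $f$- and $g$-kernels for a given pairing become equal, since $\delta(t-s)=\delta(s-t)$.

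Next I would substitute these limiting kernels into \ref{key_prop_res} and carry out the $\int_0^t ds$ integration against the delta function. Grouping the surviving terms then produces three dissipators: the $ud$ and $du$ contributions assemble into the symmetric term $A_u\rho_{mkt}A_d+A_d\rho_{mkt}A_u-\tfrac12\{A_uA_d+A_dA_u,\rho_{mkt}\}$ with coefficient $\sigma^2=\kappa\sum_{l}r_{ll}$, exactly as in Proposition \ref{BMA}; the $uu$ contribution assembles into $A_u\rho_{mkt}A_u-\tfrac12\{A_uA_u,\rho_{mkt}\}$ with coefficient $\nu_u^2$; and the $dd$ contribution into the analogous $A_d$ dissipator with coefficient $\nu_d^2$. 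The leading (inhomogeneous) term $-Tr_{env}[H_I(t),\rho^I(0)]$ is untouched by the limit and carries straight through from \ref{key_prop_res}, yielding \ref{LME_NG2}.

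The main obstacle I anticipate is purely combinatorial book-keeping rather than any analytic difficulty: one must correctly pair each $f^{lm}_{\alpha\beta}$ with the commutator-type term $A_\alpha A_\beta\rho-A_\beta\rho A_\alpha$ and each $g^{lm}_{\alpha\beta}$ with $\rho A_\alpha A_\beta-A_\alpha\rho A_\beta$, keep track of the index ranges over which $r_{l(l+2)}$ and $r_{(l+2)l}$ run, and handle the boundary value of $\int_0^t\delta(t-s)\,ds$ consistently, so that $\sigma^2$, $\nu_u^2$ and $\nu_d^2$ emerge with precisely the stated normalizations. Since the diagonal part of this calculation has already been verified in Proposition \ref{BMA}, the only genuinely new content is checking that the off-diagonal coefficients $r_{l(l+2)}$ and $r_{(l+2)l}$ feed, respectively, the pure-$A_u$ and pure-$A_d$ dissipators and nothing else, the relative weighting of these single-pairing terms against the two-pairing $\sigma^2$ term being the delicate point to get right.
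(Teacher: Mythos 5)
Your proposal is correct and follows essentially the same route as the paper: the paper's own proof is precisely the one-liner ``apply the strong coupling limit and feed equations \ref{uu}, \ref{dd} and \ref{ud,du} into Proposition \ref{key_proposition},'' and your plan simply spells out the details of that substitution --- which pairings collapse onto $r_{ll}$, $r_{l(l+2)}$ and $r_{(l+2)l}$, the reduction of every kernel to a multiple of $\delta(t-s)$, the $\int_0^t ds$ integration, and the regrouping into the three dissipators. The bookkeeping point you flag (the relative normalization of the single-pairing $\nu_u^2$, $\nu_d^2$ terms against the two-pairing $\sigma^2$ term, together with the endpoint convention for $\int_0^t\delta(t-s)\,ds$) is indeed the only delicate step, and is exactly what the paper leaves implicit.
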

\begin{proof}
The result follows from applying the strong coupling limit, and feeding equations \ref{uu}, \ref{dd}, and \ref{ud,du} into proposition \ref{key_proposition}.
\end{proof}
The next proposition shows that the operators in equation \ref{LME_NG2} act diagonally on $\rho_{mkt}$, in the sense that each matrix element in row $i$/column $j$ interacts the element  in row $i+1$/column $j+1$, and row $i-1$/column $j-1$.
\begin{proposition}\label{NG_form}
Let the operators $A_u/A_d$ be given by \ref{A_finite}. Furthermore, we assume $N\rightarrow\infty$ in $\mathbb{C}^N$, so that boundary conditions can be ignored. Finally, we assume that $-Tr_{env}[H_I(t),\rho^I(0)]=0$. Then equation \ref{LME_NG2} can be written:
\begin{align*}
\frac{d\rho_{mkt}(t)}{dt}&=\sigma^2\mathcal{L}\big(\rho_{mkt}(t)\big)-\nu_u\mathcal{L}\big(A_u\rho_{mkt}(t)A_u\big)-\nu_d\mathcal{L}\big(A_d\rho_{mkt}(t)A_d\big)
\end{align*}
Where for infinite dimensional matrix $M$ acting on $\mathbb{C}^{\infty}$, we have:
\begin{align*}
\mathcal{L}(M)_{ij}&=M_{(i+1)(j+1)}+M_{(i-1)(j-1)}-2M_{ij}
\end{align*}
\end{proposition}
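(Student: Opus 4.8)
The plan is to work entirely at the level of matrix elements in the index basis $\{|e_i\rangle\}$, where the boundary-free hypothesis ($N\to\infty$) lets me treat $A_u$ and $A_d$ as two-sided shift operators with $(A_u)_{ab}=\delta_{a,b+1}$ and $(A_d)_{ab}=\delta_{a,b-1}$. The single most useful preliminary observation is that $\mathcal{L}$ is itself a Lindblad dissipator built from these shifts. First I would record the elementary conjugation rules: for any matrix $M$, $(A_u M A_d)_{ij}=M_{(i-1)(j-1)}$ and $(A_d M A_u)_{ij}=M_{(i+1)(j+1)}$, and, crucially, that in the boundary-free limit $A_uA_d=A_dA_u=\mathbb{I}$. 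Combining these gives the operator identity
\begin{align*}
\mathcal{L}(M)&=A_d M A_u+A_u M A_d-2M=A_d M A_u+A_u M A_d-\frac{1}{2}\{A_uA_d+A_dA_u,M\}.
\end{align*}
Taking $M=\rho_{mkt}(t)$ immediately identifies the first line of \ref{LME_NG2} (the $\sigma^2$ dissipator) with $\sigma^2\mathcal{L}(\rho_{mkt}(t))$, disposing of one of the three contributions at once.

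For the remaining two terms I would substitute the \emph{conjugated} matrices into the same identity. Setting $M=A_u\rho_{mkt}(t)A_u$ and using $A_dA_u=A_uA_d=\mathbb{I}$ to collapse the outer factors yields
\begin{align*}
\mathcal{L}(A_u\rho_{mkt}(t)A_u)&=\rho_{mkt}(t)A_u^2+A_u^2\rho_{mkt}(t)-2A_u\rho_{mkt}(t)A_u,
\end{align*}
so that $-\frac{1}{2}\mathcal{L}(A_u\rho_{mkt}(t)A_u)=A_u\rho_{mkt}(t)A_u-\frac{1}{2}\{A_uA_u,\rho_{mkt}(t)\}$, which is exactly the $\nu_u^2$-bracket appearing in \ref{LME_NG2}. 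The identical manipulation with $M=A_d\rho_{mkt}(t)A_d$ reproduces the $\nu_d^2$-bracket. Assembling the three pieces then gives the claimed form, with the coefficients read off directly from the matching; I would note that this procedure delivers $-\frac{1}{2}\nu_u^2$ and $-\frac{1}{2}\nu_d^2$ as the multipliers of $\mathcal{L}(A_u\rho_{mkt}A_u)$ and $\mathcal{L}(A_d\rho_{mkt}A_d)$.

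I expect the only genuine subtlety to be the treatment of boundaries. In finite dimension $A_uA_d$ and $A_dA_u$ are each the identity minus a single rank-one corner projector, so the clean relations $A_uA_d=A_dA_u=\mathbb{I}$ and $\mathcal{L}(M)=A_dMA_u+A_uMA_d-2M$ hold only after passing to $\mathbb{C}^{\infty}$. I would therefore state explicitly that the defect terms supported on $|e_1\rangle\langle e_1|$ and $|e_N\rangle\langle e_N|$ are discarded under the $N\to\infty$ hypothesis, exactly as the boundary corrections appear in \ref{G_pde}. Everything else is routine index bookkeeping — tracking which conjugation shifts a row or column index up versus down — rather than a real obstacle.
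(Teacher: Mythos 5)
Your proof is correct and is essentially the paper's own argument: the paper likewise works with the two-sided shifts in the boundary-free limit, uses $A_uA_d=A_dA_u=\mathbb{I}$ to collapse the anticommutator in the $\sigma^2$ term, and rewrites $A_uA_u\rho_{mkt}=A_u\big(A_u\rho_{mkt}A_u\big)A_d$ and $\rho_{mkt}A_uA_u=A_d\big(A_u\rho_{mkt}A_u\big)A_u$ (and the $A_d$ analogues) so that each dissipator is recognised as $\mathcal{L}$ applied to a conjugated matrix. Your explicit remark that the matching yields multipliers $-\tfrac{1}{2}\nu_u^2$ and $-\tfrac{1}{2}\nu_d^2$ rather than the $-\nu_u$, $-\nu_d$ printed in the proposition is also right: the paper's own identities force the same factors, so that is a typographical slip in the statement of the proposition, not a gap in your argument.
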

\begin{proof}
See appendix, section \ref{appendix}.
\end{proof}
\begin{proposition}\label{derive_var_detail2}
Let the price operator $X$ be given by equation \ref{X_comm}, assume $A_u$/$A_d$ are given by \ref{A_finite}, and let the initial market state (acting on the market Hilbert space): $\rho_0$ be given by:
\begin{align}\label{rho_zero}
\rho_0&=\sum_{i,j}a_{ij}|e_i\rangle\langle e_j|\text{, }\sum_{i=1}^N|a_{ii}|^2=1
\end{align}
then under the time evolution given by proposition \ref{NC_state_prop}, the rate of change in the total variance is given by:
\begin{align}
\frac{\partial(E^{\rho_0}[X^2])}{\partial t}&=\sigma^2\sum_{i=2}^{N-1}x_i^2(a_{(i+1)(i+1)}+a_{(i-1)(i-1)}-2a_{ii})\\
&+\nu_u^2\sum_{i=3}^{N-2}x_i^2\Big(a_{(i-1)(i+1)}-\frac{1}{2}\big(a_{(i-2)i}+a_{i(i+2)}\big)\Big)\nonumber\\
&+\nu_d^2\sum_{i=3}^{N-2}x_i^2\Big(a_{(i+1)(i-1)}-\frac{1}{2}\big(a_{i(i-2)}+a_{(i+2)i}\big)\Big)\nonumber
\end{align}
where $\nu_u$ and $\nu_d$ are determined by the off-diagonal terms of the environment state (per proposition \ref{NC_state_prop}).
\end{proposition}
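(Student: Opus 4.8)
The plan is to exploit the fact that $E^{\rho_0}[X^2]$ depends only on the \emph{diagonal} entries of the density matrix. Since $X$ is diagonal in the $\{|e_i\rangle\}$ basis, we have $X^2=\sum_{i=1}^N x_i^2|e_i\rangle\langle e_i|$, so that $E^{\rho_0}[X^2]=Tr[X^2\rho_{mkt}]=\sum_{i=1}^N x_i^2\,(\rho_{mkt})_{ii}$. Differentiating and evaluating at $t=0$, where $\rho_{mkt}(0)=\rho_0$ has entries $a_{ij}$, reduces everything to the diagonal matrix elements of the generator:
\begin{align*}
\frac{\partial (E^{\rho_0}[X^2])}{\partial t}&=\sum_{i=1}^N x_i^2\,\langle e_i|\frac{d\rho_{mkt}}{dt}|e_i\rangle .
\end{align*}
So the whole task is to compute $\langle e_i|\frac{d\rho_{mkt}}{dt}|e_i\rangle$ from equation \ref{LME_NG2}, with the first-order drift term $-Tr_{env}[H_I(t),\rho^I(0)]$ set to zero as in proposition \ref{NG_form}.

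First I would record the action of the ladder operators and their adjoints. From definition \ref{A_finite} we have $A_u|e_j\rangle=|e_{j+1}\rangle$ and $A_d|e_j\rangle=|e_{j-1}\rangle$ (with truncation at the ends), and crucially $A_u^\dagger=A_d$, so $\langle e_i|A_u=\langle e_{i-1}|$ and $\langle e_i|A_d=\langle e_{i+1}|$. These relations let me evaluate each sandwich term by inspection: $\langle e_i|A_u\rho A_d|e_i\rangle=\langle e_{i-1}|\rho|e_{i-1}\rangle=a_{(i-1)(i-1)}$, and similarly $\langle e_i|A_d\rho A_u|e_i\rangle=a_{(i+1)(i+1)}$, while the anticommutator operator $A_uA_d+A_dA_u$ is diagonal (equal to $2$ on interior sites) and contributes $-2a_{ii}$. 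Collecting these three gives the first line of the claimed formula with prefactor $\sigma^2$.

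The non-classical $\nu_u^2$ and $\nu_d^2$ contributions are handled the same way, except that the jump-by-two operators $A_u^2$ and $A_d^2$ now pull \emph{off-diagonal} entries of $\rho_0$ into the diagonal evolution, which is the whole point of the result. For instance $\langle e_i|A_u\rho A_u|e_i\rangle=\langle e_{i-1}|\rho|e_{i+1}\rangle=a_{(i-1)(i+1)}$, while $A_u^2=\sum_k|e_{k+2}\rangle\langle e_k|$ yields $\langle e_i|\{A_u^2,\rho\}|e_i\rangle=a_{(i-2)i}+a_{i(i+2)}$; the mirror computation with $A_d$ produces $a_{(i+1)(i-1)}$ together with $a_{i(i-2)}+a_{(i+2)i}$. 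Assembling the three blocks, weighting by $x_i^2$ and summing over $i$ reproduces the three lines of the statement.

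The only genuine care is needed at the spectral boundaries, and this is where the stated summation ranges come from. The truncation of $A_u,A_d$ means the diagonal coefficients of $A_uA_d+A_dA_u$ drop from $2$ to $1$ at $i=1$ and $i=N$, and the jump-by-two terms lose contributions near the edges; the interior formulas reference indices $i-1,i,i+1$ (so are valid for $2\le i\le N-1$) and $i-2,\dots,i+2$ (so are valid for $3\le i\le N-2$), which is exactly the range over which each sum is taken. The main obstacle is therefore bookkeeping rather than anything conceptual: keeping the adjoint identities $A_u^\dagger=A_d$ straight across all the sandwich terms, and checking that the edge terms excluded from the sums are precisely those for which the interior expressions would reference out-of-range indices.
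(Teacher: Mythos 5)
Your proposal is correct and follows essentially the same route as the paper: reduce $\partial_t E^{\rho_0}[X^2]$ to $Tr\big[X^2\,\frac{d\rho_{mkt}}{dt}\big]=\sum_i x_i^2\langle e_i|\frac{d\rho_{mkt}}{dt}|e_i\rangle$, insert the Lindblad generator from proposition \ref{NC_state_prop} with the drift term dropped, and evaluate each sandwich term using the shift action of $A_u$, $A_d$ and $A_u^\dagger=A_d$, with the summation ranges fixed by the boundary truncation. Your explicit bra-ket evaluation of the matrix elements is just a more detailed writing-out of the trace identities the paper states.
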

\begin{proof}
See appendix, section \ref{appendix}.
\end{proof}
Note that due to the presence of off-diagonal terms in the environment density matrix (parameterized in this case by $\nu_u$ and $\nu_d$), we have both Gaussian contributions to the evolution of the variance:
\begin{align*}
\sigma^2\sum_{i=2}^{N-1}x_i^2(r_{(i+1)(i+1)}+r_{(i-1)(i-1)}-2r_{ii})
\end{align*}
and also non-Gaussian terms that depend on the off-diagonal components:
\begin{align*}
&+\nu_u^2\sum_{i=3}^{N-2}x_i^2\Big(a_{(i-1)(i+1)}-\frac{1}{2}\big(a_{(i-2)i}+a_{i(i+2)}\big)\Big)\\
&+\nu_d^2\sum_{i=3}^{N-2}x_i^2\Big(a_{(i+1)(i-1)}-\frac{1}{2}\big(a_{i(i-2)}+a_{(i+2)i}\big)\Big)
\end{align*}
If the terms $\nu_u$, $\nu_d$ are zero, then the evolution will be Gaussian, and a diagonal market density matrix will remain diagonal. If $\nu_u\neq 0$ or $\nu_d\neq 0$, then even if the density matrix starts in a diagonal (classical) state, the non-Gaussian evolution will evolve non-zero off-diagonal terms as the simulation progresses.
\subsection{Non-Gaussian Extension II: Non-Local Operators}\label{NGII}
In section \ref{Gaussian}, we have studied the case whereby the market response to a change in the environment to a higher level of risk appetite, is that the price jumps by a fixed amount. Ie, if $\rho_{mkt}(0)=|e_i\rangle\langle e_i|$, then the initial price is given by the eigenvalue: $x_i$, and the market response to an increase in risk appetite would be given by:
\begin{align*}
A_u|e_i\rangle&=|e_{i+1}\rangle\\
X|e_{i+1}\rangle&=x_{i+1}|e_{i+1}\rangle
\end{align*}
So that the price jumps from $x_i$ to $x_{i+1}$. Now, consider the case whereby the response to an increase in the environment risk appetite is uncertain.  The price may jump by 1 level, or more, or the price may not jump at all. In order to introduce this effect, we apply a discrete convolution with a probability distribution labelled $P_H$:
\begin{align}\label{P_H_def}
P_H&=\sum_ih_i|e_i\rangle
\end{align}
We start with the operator defined by equation \ref{A_finite}:
\begin{align*}
A_u|e_i\rangle&=|e_{i+1}\rangle\text{, }i<N
\end{align*}
In order to apply the convolution between $|\psi\rangle$ and $H$, we can use the following operator:
\begin{definition}\label{finite_H_def}
\begin{align*}
H&=\sum_{j=1}^N\sum_{k=1-j}^{N-j}h_k|e_{j+k}\rangle\langle e_j| 
\end{align*}
The operators $A^H_u$, $A^H_d$, together with the new interaction Hamiltonian: $H_I$ can now be defined by:
\begin{align*}
A^H_u&=A_uH\\
&=\sum_{i=1}^{N-1}\sum_{j=1}^Nh_{i-j}|e_{i+1}\rangle\langle e_j|\\
A^H_d&=A^{H\dagger}_u\\
&=H^{\dagger}A_d
\end{align*}
\end{definition}
Using this definition for the operators $A_u^H$, and $A_d^H$, we can now restate proposition: \ref{fin_diag_prop}, with definition \ref{finite_H_def} in place of definition \ref{A_finite}.
\begin{proposition}\label{LME_quantum_fin}
Let the market Hilbert space be given by: $\mathcal{H}_{mkt}=\mathbb{C}^N$. Furthermore, assume definition \ref{finite_H_def} applies with regards to the interaction Hamiltonian. Then, after applying the Born-Markov approximation given in proposition \ref{BMA}, the reduced density matrix for the market evolves according to:
\begin{align}\label{C_rho_eqn_H_fin}
\frac{d\rho_{mkt}(t)}{dt}&=\sigma^2\Big(A_u^H\rho_{mkt}(t)A_d^H+A_d^H\rho_{mkt}(t)A_u^H-\frac{1}{2}\big\{A_u^HA_d^H+A_d^HA_u^H,\rho_{mkt}(t)\big\}\Big)
\end{align}
\end{proposition}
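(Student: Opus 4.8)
The plan is to reduce Proposition \ref{LME_quantum_fin} to the already-established Born--Markov result (Proposition \ref{BMA}) by observing that the new operators $A_u^H$ and $A_d^H$ play exactly the role that $A_u$ and $A_d$ played there. Looking back at the derivation of \ref{Markov Res}, the specific algebraic form of $A_u$ and $A_d$ was never used: the master equation was obtained purely from the structure of the interaction Hamiltonian $H_I = \sqrt{\kappa\gamma}\sum_{\alpha}A_\alpha\otimes B_\alpha$, from the strong coupling limit (Proposition \ref{SCL}), and from the thermal environment state \ref{identity}. The dissipator $A_u\rho A_d + A_d\rho A_u - \frac{1}{2}\{A_uA_d+A_dA_u,\rho\}$ emerged with whatever market operators appeared in $H_I$. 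So the first step is to verify that the new interaction Hamiltonian is once again of the admissible form, now with the pair $(A_u^H, A_d^H)$ replacing $(A_u,A_d)$.

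First I would confirm that the replacement is self-consistent as a coupling. The environment operators $B_u$, $B_d$ defined in \ref{B_def} are unchanged, and Proposition \ref{B(t)_prop} giving their interaction-picture form $B_u(t)$, $B_d(t)$ depends only on the environment Hamiltonian \ref{product}, not on the market side. Hence the coefficient functions $f^{lm}_{\alpha\beta}$, $g^{lm}_{\alpha\beta}$ and their collapse under the strong coupling limit to $\kappa\delta(t-s)/K$ are identical to the computation in Proposition \ref{BMA}. The only point requiring a check is Hermiticity: definition \ref{finite_H_def} sets $A_d^H = (A_u^H)^{\dagger}$, so that $A_u^H\otimes B_u + A_d^H\otimes B_d$ is self-adjoint (recall $B_d = B_u^{\dagger}$), which is exactly the property that made $H_I$ a legitimate Hamiltonian and made the derivation in Proposition \ref{key_proposition} go through. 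Once this is noted, the entire chain leading from \ref{key_prop_res} to \ref{Markov Res} applies verbatim with $A_\alpha \mapsto A_\alpha^H$.

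Carrying this through, I would substitute $A_u \to A_u^H$, $A_d \to A_d^H$ directly into \ref{Markov Res}. The dissipative term becomes
\begin{align*}
\sigma^2\Big(A_u^H\rho_{mkt}(t)A_d^H+A_d^H\rho_{mkt}(t)A_u^H-\tfrac{1}{2}\big\{A_u^HA_d^H+A_d^HA_u^H,\rho_{mkt}(t)\big\}\Big),
\end{align*}
with $\sigma^2 = \kappa(K-1)/K$ as in \ref{vol_vs_K}, which is precisely \ref{C_rho_eqn_H_fin} (the coherent term $-Tr_{env}[H_I(t),\rho^I(0)]$ is taken to vanish under the stated assumptions, matching the convention used for the diagonal case in Proposition \ref{fin_diag_prop}).

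The main obstacle is not computational but conceptual: one must justify that substituting a \emph{different} pair of market operators into the already-derived master equation is legitimate, i.e.\ that no step in Propositions \ref{key_proposition}, \ref{SCL}, or \ref{BMA} secretly relied on the simple shift structure of the original $A_u$, $A_d$ from definition \ref{A_finite}. The cleanest way to discharge this is to point out that throughout those derivations the market operators were carried along as opaque symbols $A_u$, $A_d$, commuting through every environment trace, and the only structural facts invoked were $A_d = A_u^{\dagger}$ and the form of the coupling $\sum_\alpha A_\alpha\otimes B_\alpha$. Since definition \ref{finite_H_def} guarantees $A_d^H = (A_u^H)^{\dagger}$ and preserves the coupling form, the substitution is valid and the proposition follows immediately.
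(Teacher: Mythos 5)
Your proposal is correct and takes essentially the same route as the paper, whose entire proof is the one-line observation that the result follows from inserting the operators of definition \ref{finite_H_def} into proposition \ref{BMA}. Your additional checks (that $A_d^H=(A_u^H)^{\dagger}$ preserves self-adjointness of $H_I$, that the market operators enter the derivation of \ref{Markov Res} only as opaque symbols, and that the coherent term $-Tr_{env}[H_I(t),\rho^I(0)]$ is dropped by convention) simply make explicit the justification the paper leaves implicit.
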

\begin{proof}
The result follows from inserting the operators given in definition \ref{finite_H_def} into proposition \ref{BMA}.
\end{proof}
In order to gain a qualitative understanding of the dynamics described by proposition \ref{LME_quantum_fin}, we consider the simplified setup whereby:
\begin{align}\label{rev_NG}
P_H&=(h_1,h_0,h_1)\\
h_0,h_1&\in\mathbb{R}\nonumber\\
h_0^2+2h_1^2&=1\nonumber
\end{align}
We set the initial state to
\begin{align}\label{rev_NG_state}
\rho_0&=\sum_{i,j=1}^Na_{ij}|e_i\rangle\langle e_j|\\
\sum_i|a_{ii}|^2&=1\nonumber\\
a_{ij}&\rightarrow 0\text{ for }i,j\rightarrow 1,N\nonumber\\
N&\rightarrow\infty\nonumber
\end{align}
The purpose of proposition \ref{derive_var_detail}, is to highlight why (in this simplified case) the variance depends not just on the diagonal elements of the density matrix, but also on the non-diagonal elements. In other words, we show why in the quantum case, 2 density matrices with the same initial probability distribution for $X$, have different variance growth rates. 
\begin{proposition}\label{derive_var_detail}
Let the price operator $X$ be given by equation \ref{X_comm}, the vector $P_H$ by equation \ref{rev_NG}, and the initial state: $\rho_0$, by equation \ref{rev_NG_state}, then under the time evolution given by proposition \ref{LME_quantum_fin}, the rate of change in the total variance is given by:
\begin{align}
\frac{\partial(E^{\rho_0}[X^2])}{\partial t}&=\sigma^2\sum_{i=2}^{N-1}x_i^2h_0^2(a_{(i+1)(i+1)}+a_{(i-1)(i-1)}-2a_{ii})\\
&+\sum_{i=3}^{N-2}x_i^2h_1^2(a_{(i+2)(i+2)}+a_{(i-2)(i-2)}-2a_{ii})\Big)\nonumber\\
&+\sigma^2\sum_{i=3}^{N-2}h_0h_1x_i^2\Big(a_{(i+1)(i+2)}+a_{(i+2)(i+1)}+a_{(i-1)(i-2)}+a_{(i-2)(i-1)}\nonumber\\
&-a_{i(i+1)}+a_{(i+1)i}+a_{i(i-1)}+a_{(i-1)i}\Big)\nonumber
\end{align}
\end{proposition}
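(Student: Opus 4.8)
The plan is to differentiate $E^{\rho_0}[X^2]=Tr[X^2\rho_{mkt}(t)]$ directly. Since $X^2$ is time-independent,
\begin{align*}
\frac{\partial(E^{\rho_0}[X^2])}{\partial t}&=Tr\Big[X^2\frac{d\rho_{mkt}(t)}{dt}\Big]
\end{align*}
and I would then substitute the right-hand side of the master equation from proposition \ref{LME_quantum_fin}, evaluated at the initial state $\rho_{mkt}(0)=\rho_0$. The single most useful simplification is that $X^2=\sum_i x_i^2|e_i\rangle\langle e_i|$ is diagonal, so that for any operator $M$ one has $Tr[X^2M]=\sum_i x_i^2 M_{ii}$. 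Hence I only ever need the \emph{diagonal} matrix elements of the three pieces $A^H_u\rho_0 A^H_d$, $A^H_d\rho_0 A^H_u$ and $\tfrac12\{A^H_uA^H_d+A^H_dA^H_u,\rho_0\}$, which already explains why the final answer appears as a single sum $\sum_i x_i^2(\cdots)$.

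Before computing, I would record the action of the dressed operators on basis vectors. For the symmetric three-tap filter $P_H=(h_1,h_0,h_1)$, definition \ref{finite_H_def} gives $H|e_j\rangle=h_1|e_{j-1}\rangle+h_0|e_j\rangle+h_1|e_{j+1}\rangle$, and therefore, using $A_u|e_i\rangle=|e_{i+1}\rangle$ together with $A^H_d=(A^H_u)^\dagger$,
\begin{align*}
A^H_u|e_j\rangle&=h_1|e_j\rangle+h_0|e_{j+1}\rangle+h_1|e_{j+2}\rangle\\
A^H_d|e_j\rangle&=h_1|e_j\rangle+h_0|e_{j-1}\rangle+h_1|e_{j-2}\rangle
\end{align*}
Since we assume $N\to\infty$ with $a_{ij}\to0$ near the boundary, every index shift below is legitimate and no boundary terms survive. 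I would then evaluate $(A^H_d\rho_0 A^H_u)_{ii}=\sum_{p,q}(A^H_d)_{ip}a_{pq}(A^H_u)_{qi}$, where each of $(A^H_d)_{ip}$ and $(A^H_u)_{qi}$ is supported on three values of $p$, respectively $q$, with weights $h_1,h_0,h_1$. This produces a $3\times3$ array of monomials in $h_0,h_1$ multiplying shifted entries $a_{pq}$; the same bookkeeping is carried out for $A^H_u\rho_0 A^H_d$ and for the two anticommutator pieces.

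The final step is to collect the resulting terms by their monomial in $(h_0,h_1)$. The $h_0^2$ contributions reproduce the nearest-neighbour discrete Laplacian $a_{(i+1)(i+1)}+a_{(i-1)(i-1)}-2a_{ii}$ of proposition \ref{fin_diag_prop}; the $h_1^2$ contributions give the analogous next-nearest-neighbour Laplacian with shifts $\pm2$; and the $h_0h_1$ cross terms assemble the off-diagonal block involving $a_{(i\pm1)(i\pm2)}$ and $a_{i(i\pm1)}$. I expect the main obstacle to be purely organisational: tracking the many $h_0h_1$ cross terms arising from both products $A^H_u\rho_0A^H_d$, $A^H_d\rho_0A^H_u$ and from the anticommutator, and checking sector by sector in $(h_0,h_1)$ that the self-coupling contributions proportional to $a_{ii}$ combine to the stated coefficients (in particular the $-2a_{ii}$ of each Laplacian), while the remaining shifted entries assemble into the off-diagonal combination. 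The symmetry of the filter ($h_{-1}=h_{+1}=h_1$) is precisely what makes the $\pm$ shifts pair up cleanly, and it is worth verifying the limit $h_1=0,\,h_0=1$, which collapses the expression to the purely Gaussian diagonal diffusion of proposition \ref{fin_diag_prop} as a consistency check.
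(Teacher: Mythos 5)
Your proposal is correct and follows essentially the same route as the paper's appendix proof: differentiate $Tr[X^2\rho_{mkt}(t)]$, insert the generator from proposition \ref{LME_quantum_fin}, use the diagonality of $X^2$ to reduce everything to the diagonal matrix elements of $A^H_u\rho_0A^H_d$, $A^H_d\rho_0A^H_u$ and the anticommutator, and then collect terms by monomials in $h_0,h_1$. The only cosmetic difference is that you substitute $h_{-1}=h_1$ from the outset, whereas the paper carries $h_{-1}$ as a formal symbol through equations \ref{app_au_rho_ad}--\ref{app_diag_4} and observes that the $h_{-1}^2$, $h_{-1}h_0$ and $h_{-1}h_1$ sectors cancel identically; the bookkeeping is otherwise the same.
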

\begin{proof}
See appendix \ref{appendix}.
\end{proof}
Note that in proposition \ref{derive_var_detail} we have:
\begin{itemize}
\item Standard Gaussian terms: $h_0^2(a_{(i+1)(i+1)}+a_{(i-1)(i-1)}-2a_{ii})$
\item Non-Gaussian terms that act on the diagonal of the density matrix: 

$h_1^2(a_{i+2}^2+a_{i-2}^2-2a_i^2)$
\item Non-Gaussian terms that act on the off-diagonal elements of the density matrix: $\sigma^2h_0h_1x_i^2(a_{(i+1)(i+2)}+a_{(i+2)(i+1)}+a_{(i-1)(i-2)}+a_{(i-2)(i-1)}-a_{i(i+1)}+a_{(i+1)i}+a_{i(i-1)}+a_{(i-1)i})$
\end{itemize}
\subsection{Numerical Illustrations:}
In this section, we seek to illustrate the impact that the non-Gaussian extension discussed in section \ref{NGI}, and the level of the market entropy, have on the evolution of the resulting probability distribution. To start with, in section \ref{setup}, we describe the setup for some numerical simulations, before discussing the results in section \ref{results}.

Note that in this section the focus is on the statistical properties of the resulting random processes, rather than any specific application to financial underlyings.
\subsubsection{Basic Setup:}\label{setup}
The market Hilbert space is set to $\mathcal{H}_{mkt}=\mathbb{C}^{1001}$. In this case we set the underlying random variable, $X$ to:
\begin{align*}
X&=\sum_{i=1}^{1001}x_i|e_i\rangle\langle e_i|\\
x_i&=-\frac{1}{2}+\frac{i-1}{1000}
\end{align*}
We define the initial market state as follows, where $N(x,\mu,\sigma)$ is the normal distribution density function with mean $\mu$, and variance $\sigma^2$:
\begin{align}\label{define_theta}
\rho_0(\theta)&=\theta\rho_c+(1-\theta)\rho_q\\
\rho_c&=\sum_{i=1}^{1001}p_i|e_i\rangle\langle e_i|\text{, }p_i=N(x_i,0,0.005)\nonumber\\
\rho_q&=(\overline{P}\cdot\overline{P}^T)\text{, }\overline{P}^T=(\sqrt{p_1}\text{, }\sqrt{p_2}\text{, }\dots\text{, }\sqrt{p_{1001}})\nonumber
\end{align}
This ensures that $\rho_0(0)$ is a pure state, and $\rho_0(1)$ is a maximum entropy diagonal state, and also that the initial probability distribution for $X$ is unaffected by the choice of $\theta$. Figure \ref{mkt_ent} shows the initial entropy as a function of $\theta$ for the initial market state.
\begin{figure}
\includegraphics[scale=0.75]{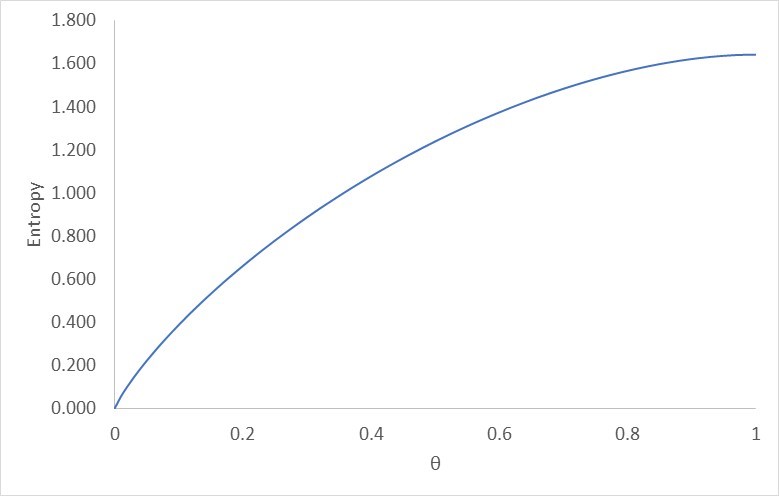}
\caption{Entropy for the initial market state, as a function of $\theta$.}\label{mkt_ent}
\end{figure}
We set the environment dimension to $K=11$. Finally, for $\sigma^2$, we set:
\begin{align*}
\sigma^2&=\bigg(\frac{0.02}{\delta x}\bigg)^2\\
&=400
\end{align*}
Under these conditions, we run the following:
\begin{enumerate}
\item[{\em Sim 1}] Gaussian simulation, based on proposition \ref{BMA} and definition \ref{A_finite}.
\item[{\em Sim 2}] Non-Gaussian simulation based on proposition \ref{NG_form}. We scale $\nu_u/\nu_d$ in steps from zero up to $\nu_u/\nu_d=\sigma^2$.
\item[{\em Sim 3}] Non-Gaussian simulation based on proposition \ref{LME_quantum_fin} with $P_H$ given by equation \ref{rev_NG}. We set: $h_0^2=1-2h$, $h_1^2=h$, with $h$ varying in steps from $0$ to $20\%$.
\end{enumerate}
\subsubsection{Sim 1:}\label{results}
Figure \ref{res_1} shows the variance \& the entropy gain for the Gaussian simulation. The variance does not depend on the entropy of the initial state. This is because under proposition \ref{BMA}, with definition \ref{A_finite}, there is no interaction between the diagonal of $\rho_{mkt}$ and the off diagonal elements. A classical density matrix, remains classical (ie diagonalized relative to the main traded price operator).

The entropy gain over the simulation is greater where there is lower entropy to start with. Ie more information regarding the market is lost, if there is more to lose to begin with.
\begin{figure}
\includegraphics[scale=0.7]{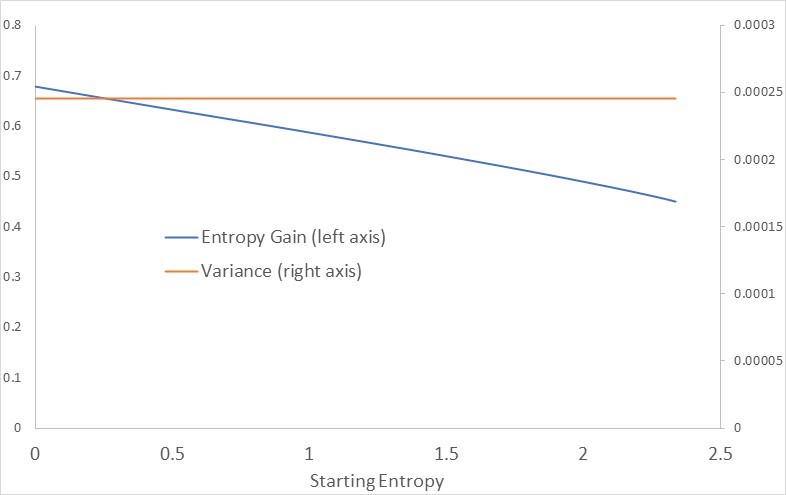}
\caption{The chart shows the impact of the starting entropy on the Gaussian evolution, under the simulation described above: {\em Sim 1}. The horizontal axis shows the value of $\theta$ in equation \ref{define_theta}. The left axis shows the gain in Von-Neumann entropy, and the right axis shows the variance after 1000 time-steps.}\label{res_1}
\end{figure}
\subsubsection{Sim 2:}
Figure \ref{res_2} shows the results from the simulation based on Non-Gaussian Extension I, with a classical initial state. Note first that the variance is not dependent on the choice of $\nu_u/\nu_d$. Both the probability distributions shown have the same variance. As described in section \ref{NGI}, the non-Gaussian contributions to the first time-step depend on the off diagonal elements, which are all zero initially. Then since variance is linear, the variance introduced in future time-steps must match the first, and thus do not depend on $\nu_u/\nu_d$.

The same does not apply for the higher moments of the distribution, which do depend on $\nu_u/\nu_d$. Even where the state starts classical, it does not remain so. The non-Gaussian elements drive excess kurtosis, which in turn effects the entropy gain or information loss over the simulation. A higher rate of retention of information (lower entropy gain), is associated with higher excess kurtosis. Note that we define excess kurtosis as:
\begin{align*}
\text{Excess Kurtosis}&=\frac{E[X^4]-(3*E[X^2])^2}{(3*E[X^2])^2}
\end{align*}
Figure \ref{res_3} shows the variance of the process as a function of $\nu_u/\nu_d$, for a zero entropy state, and a classical entropy state. For the zero entropy initial state, the off-diagonal elements in $\rho_{mkt}(0)$ are non-zero from the beginning, which means that changing $\nu_u/\nu_d$ impacts the variance where the initial state is non-classical.
\begin{figure}
\includegraphics[scale=0.4]{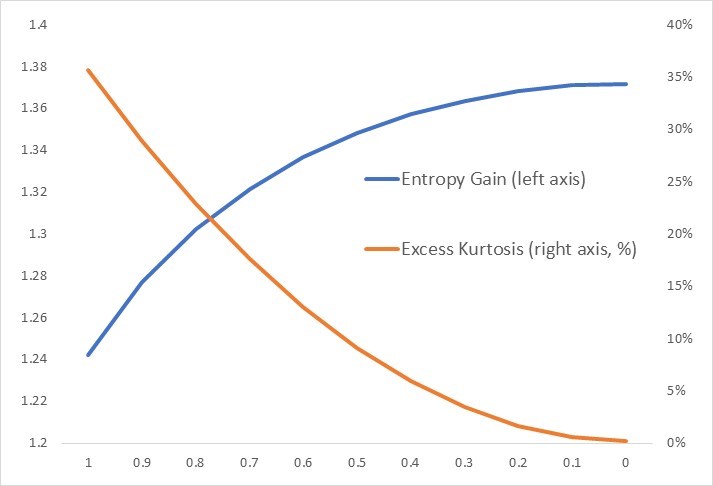}
\includegraphics[scale=0.4]{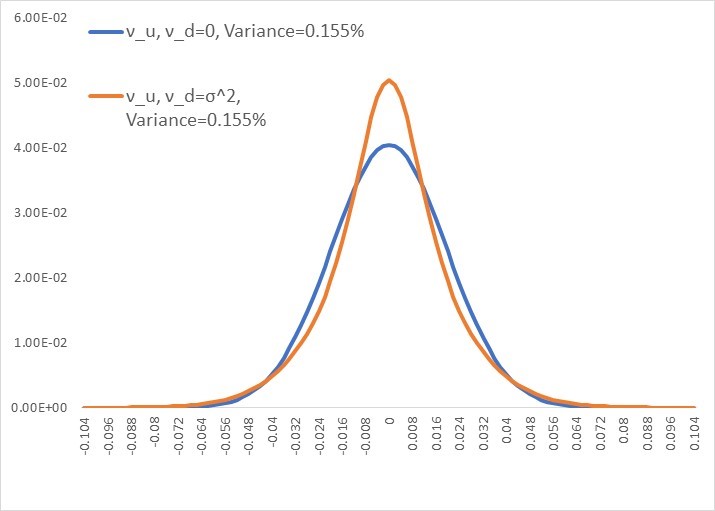}
\caption{The left chart shows the excess Kurtosis, and the entropy gained, after 1000 time-steps of the Non-Gaussian simulation, as described above: {\em Sim 2}, with a classical initial state. The horizontal axis shows the ratio of $\nu_u/\nu_d$ as a fraction of $\sigma^2$. The right chart shows the final distributions for the price observable for $\nu_u/\nu_d=0$ and $\nu_u/\nu_d=\sigma^2$.}\label{res_2}
\end{figure}
\begin{figure}
\includegraphics[scale=0.7]{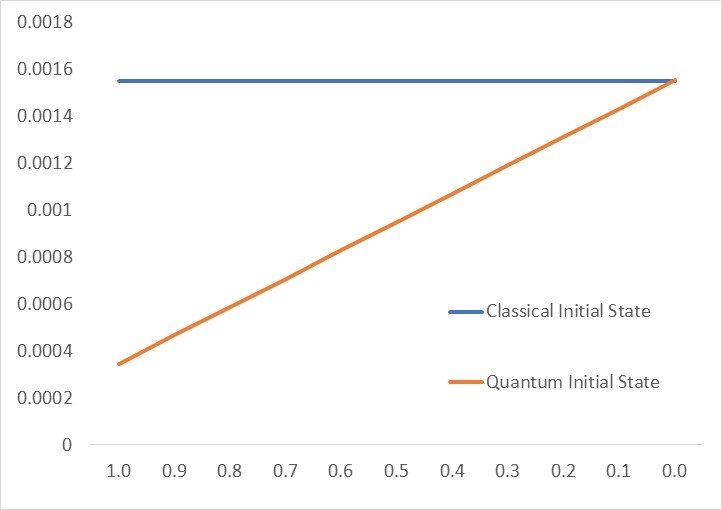}
\caption{The chart shows the variance after 1000 time-steps, for the simulation described above: {\em Sim 2}. The horizontal axis shows the ratio of $\nu_u/\nu_d$ to $\sigma^2$. The blue line shows results with a classical initial state, and the orange line shows the results for a zero entropy initial state.}\label{res_3}
\end{figure}
\subsubsection{Sim 3:}
Figure \ref{res_4} shows the results for the simulation based on the Non-Gaussian extension II, with a classical initial state. Here the increase in excess kurtosis are non-monotonic as the contribution from the non-Gaussian components are scaled up. This is because now increasing the value of $h$ increases both the fourth moment $E[X^4]$ and the variance. The simulation still confirms the inverse relationship between the entropy gain/information loss over the simulation, and the excess kurtosis. 
\begin{figure}
\includegraphics[scale=0.7]{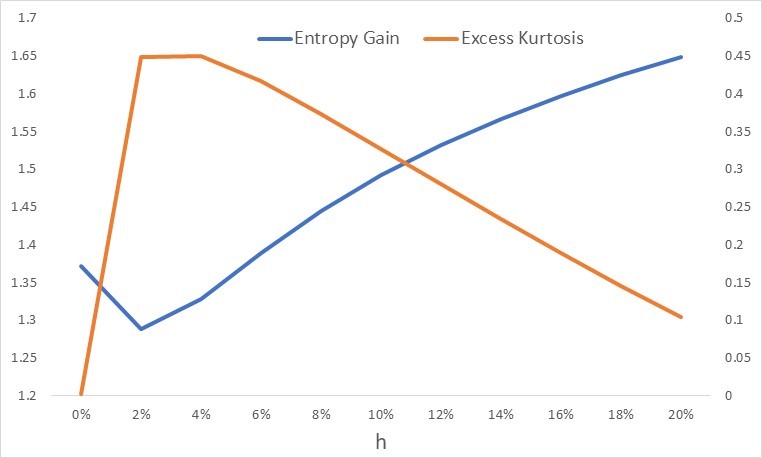}
\caption{The chart shows the excess Kurtosis, and the entropy gained, after 1000 time-steps of the Non-Gaussian simulation, as described above: {\em Sim 3}. The horizontal axis shows the chosen value for $h$ which is increased from $0\%$ (Gaussian case) to a maximum of $20\%$.}\label{res_4}
\end{figure}
\section{Conclusion:}
Fundamentally, whenever one turns to probability in financial modelling, this is motivated by a lack of information, or knowledge regarding the future. We have argued in this article that the gradual loss of information regarding financial prices, is a key statistic to monitor, alongside factors such as the variance. Furthermore, we have illustrated, in section \ref{entropy}, hypothetical examples of where there can be differing levels of information regarding the state of the market. Further consideration of this concept, for example the relation between market micro-structure and market entropy, are an interesting avenue for future research.

The open quantum systems approach, discussed in this article, represents a more flexible means of modelling the random evolution of the financial market, and in particular, modelling the degree of entropy gain as we model further and further into the future.

In the article, the results we have presented depend largely on numerical simulation, based on discretized models. Developing analytic, or semi-analytic, solutions to the non-Gaussian approaches is another key avenue for future research.
\section{Appendix: Detailed Derivations}\label{appendix}
\begin{proof}[Proof of Proposition \ref{entropy_proposition}]
First of all, we consider a state $\rho\in\mathcal{A}$, and write out the spectral resolution in some orthonormal basis: $|\phi_i\rangle$:
\begin{align*}
\rho&=\sum_{j=1}^Nq_j|\phi_j\rangle\langle\phi_j|
\end{align*}
Where, $q_j\geq 0$ (some $q_j$ could be zero). Then by assumption:
\begin{align*}
p_i&=Tr[\rho P_i]\\
&=\sum_{j=1}^Nq_j|\langle\phi_j|e_i\rangle|^2
\end{align*}
Then we have:
\begin{align*}
S(\rho_{classical})&=-\sum_{i=1}^Np_i\log(p_i)\\
&=-\sum_{i=1}^N\Big(\sum_{j=1}^Nq_j|\langle e_i|\phi_j\rangle|^2\Big)\log\Big(\sum_{j=1}^Nq_j|\langle e_i|\phi_j\rangle|^2\Big)
\end{align*}
We now label: $a_{ij}=|\langle e_i|\phi_j\rangle|^2$, and note that:
\begin{align*}
a_{ij}&\geq 0\\
\sum_{i=1}^Na_{ij}&=\sum_{j=1}^Na_{ij}=1
\end{align*}
Therefore, we have:
\begin{align*}
S(\rho_{classical})&=-\sum_{i=1}^N\Big(\sum_{j=1}^Nq_ja_{ij}\Big)\log\Big(\sum_{j=1}^Nq_ja_{ij}\Big)\\
&=\sum_{i=1}^Nf\Big(\sum_{j=1}^Nq_ja_{ij}\Big)\text{, where }f(x)=-x\log(x)\\
&=\sum_{i=1}^Nf(q_1a_{i1}+q_2a_{i2}+...+q_Na_{iN})
\end{align*}
We have: $f''(x)\leq 0\text{, for }x\geq 0$. So therefore, $f(x)$ is a concave function, and we have from Jensen's inequality that:
\begin{align*}
S(\rho_{classical})&\geq \sum_{i=1}^N a_{i1}f(q_1)+a_{i2}f(q_2)+...+a_{iN}f(q_N)\\
&=\Big(\sum_{i=1}^Na_{i1}\Big)f(q_1)+\Big(\sum_{i=1}^Na_{i2}\Big)f(q_2)+...+\Big(\sum_{i=1}^Na_{iN}\Big)f(q_N)\\
&=\sum_{j=1}^Nf(q_j)\\
&=S(\rho)
\end{align*}
\end{proof}
\begin{proof}[Proof of Proposition \ref{NG_form}]
We start with the Lindblad master equation, and consider each term:
\begin{align*}
\frac{d\rho_{mkt}(t)}{dt}&=-Tr_{env}[H_I(t),\rho^I(0)]\\
&+\sigma^2\Big(A_u\rho_{mkt}(t)A_d+A_d\rho_{mkt}(t)A_u-\frac{1}{2}\{A_uA_d+A_dA_u,\rho_{mkt}(t)\}\Big)\\
&+\nu_u^2\Big(A_u\rho_{mkt}(t)A_u-\frac{1}{2}\{A_uA_u,\rho_{mkt}(t)\}\Big)\\
&+\nu_d^2\Big(A_d\rho_{mkt}(t)A_d-\frac{1}{2}\{A_dA_d,\rho_{mkt}(t)\}\Big)
\end{align*}
We find that:
\begin{align*}
(A_uA_d+A_dA_u)\sum_{i,j=-\infty}^{\infty}M_{ij}|e_i\rangle\langle e_j|&=2\sum_{i,j=-\infty}^{\infty}M_{ij}|e_i\rangle\langle e_j|\\
A_u\sum_{i,j=-\infty}^{\infty}M_{ij}|e_i\rangle\langle e_j|A_d&=\sum_{i,j=-\infty}^{\infty}M_{(i-1)(j-1)}|e_i\rangle\langle e_j|\\
A_d\sum_{i,j=-\infty}^{\infty}M_{ij}|e_i\rangle\langle e_j|A_u&=\sum_{i,j=-\infty}^{\infty}M_{(i+1)(j+1)}|e_i\rangle\langle e_j|
\end{align*}
Similarly, we have that:
\begin{align*}
A_uA_u\sum_{i,j=-\infty}^{\infty}M_{ij}|e_i\rangle\langle e_j|&=A_u\Big(A_u\sum_{i,j=-\infty}^{\infty}M_{ij}|e_i\rangle\langle e_j|A_u\Big)A_d
\end{align*}
Similarly:
\begin{align*}
\sum_{i,j=-\infty}^{\infty}M_{ij}|e_i\rangle\langle e_j|A_uA_u&=A_d\Big(A_u\sum_{i,j=-\infty}^{\infty}M_{ij}|e_i\rangle\langle e_j|A_u\Big)A_u\\
A_dA_d\sum_{i,j=-\infty}^{\infty}M_{ij}|e_i\rangle\langle e_j|&=A_d\Big(A_d\sum_{i,j=-\infty}^{\infty}M_{ij}|e_i\rangle\langle e_j|A_d\Big)A_u\\
\sum_{i,j=-\infty}^{\infty}M_{ij}|e_i\rangle\langle e_j|A_dA_d&=A_u\Big(A_d\sum_{i,j=-\infty}^{\infty}M_{ij}|e_i\rangle\langle e_j|A_d\Big)A_d
\end{align*}
The result follows from feeding this into the Lindblad master equation.
\end{proof}
\begin{proof}[Proof of Proposition \ref{derive_var_detail2}]
First note that:
\begin{align}\label{star}
\frac{\partial(E^{\rho_0}[X^2])}{\partial t}&=\frac{\partial(Tr[X^2\rho_0(t)])}{\partial t}\nonumber\\
&=\frac{\partial}{\partial t}\bigg(\sum_{i=1}^Nx_i^2a_{ii}(t)\bigg)\nonumber\\
&=\sum_{i=1}^Nx_i^2\frac{\partial a_{ii}(t)}{\partial t}\nonumber\\
&=Tr\Big[X^2\frac{\partial\rho_0(t)}{\partial t}\Big]
\end{align}
Where the third line of \ref{star} follows from the fact that under the system Hamiltonian \ref{H_sys} we have:
\begin{align*}
e^{-iH_{sys}t}Xe^{iH_{sys}t}&=X
\end{align*}
Applying proposition \ref{NC_state_prop} to equation \ref{star}, we get:
\begin{align}\label{star2}
\frac{\partial(E^{\rho_0}[X^2])}{\partial t}&=\sigma^2Tr\Big[X^2\Big(A_u\rho_{mkt}(t)A_d+A_d\rho_{mkt}(t)A_u\\
&-\frac{1}{2}\{A_uA_d+A_dA_u,\rho_{mkt}(t)\}\Big)\Big]\nonumber\\
+\nu_u^2&Tr\Big[X^2\Big(A_u\rho_{mkt}(t)A_u-\frac{1}{2}\{A_uA_u,\rho_{mkt}(t)\}\Big)\Big]\nonumber\\
+\nu_d^2&Tr\Big[X^2\Big(A_d\rho_{mkt}(t)A_d-\frac{1}{2}\{A_dA_d,\rho_{mkt}(t)\}\Big)\Big]\nonumber
\end{align}
Under definition \ref{A_finite} and equation \ref{rho_zero}, we have:
\begin{align*}
Tr&\Big[X^2\Big(A_u\rho_{mkt}(t)A_d+A_d\rho_{mkt}(t)A_u-\frac{1}{2}\{A_uA_d+A_dA_u,\rho_{mkt}(t)\}\Big)\Big]\\
&=\sum_{i=2}^{N-1}x_i^2(a_{(i+1)(i+1)}+a_{(i-1)(i-1)}-2a_{ii})\\
Tr&\Big[X^2\Big(A_u\rho_{mkt}(t)A_u-\frac{1}{2}\{A_uA_u,\rho_{mkt}(t)\}\Big)\Big]\\
&=\sum_{i=3}^{N-2}x_i^2\Big(a_{(i-1)(i+1)}-\frac{1}{2}\big(a_{(i-2)i}+a_{i(i+2)}\big)\Big)\\
Tr&\Big[X^2\Big(A_d\rho_{mkt}(t)A_d-\frac{1}{2}\{A_dA_d,\rho_{mkt}(t)\}\Big)\Big]\\
&=\sum_{i=3}^{N-2}x_i^2\Big(a_{(i+1)(i-1)}-\frac{1}{2}\big(a_{i(i-2)}+a_{(i+2)i}\big)\Big)
\end{align*}
The result follows by feeding this into equation \ref{LME_NG2}.
\end{proof}
\begin{proof}[Proof of Proposition \ref{derive_var_detail}]
From proposition \ref{derive_var_detail2} we have that:
\begin{align}\label{star3}
\frac{\partial(E^{\rho_0}[X^2])}{\partial t}&=Tr\Big[X^2\frac{\partial\rho_0(t)}{\partial t}\Big]
\end{align}
Applying proposition \ref{LME_quantum_fin} to equation \ref{star3}, we get:
\begin{align}\label{star4}
\frac{\partial(E^{\rho_0}[X^2])}{\partial t}&=\sigma^2Tr\Big[X^2(A_u^H\rho_0A_d^H+A_d^H\rho_0A_u^H-\frac{1}{2}\{A_u^HA_d^H+A_d^HA_u^H,\rho_0\}\Big]
\end{align}
We now calculate the terms in \ref{star3} using definition \ref{finite_H_def}, and proposition \ref{LME_quantum_fin}.
Under \ref{rev_NG} we have:
\begin{align}\label{app_AuH}
A_d^H&=\sum_{i=1}^{N-1}\big(h_0|e_i\rangle\langle e_{i+1}|+h_{-1}|e_{i+1}\rangle\langle e_{i+1}|\big)+\sum_{i=2}^{N-1}h_1|e_{i-1}\rangle\langle e_{i+1}|\\
A_u^H&=\sum_{i=1}^{N-1}\big(h_0|e_{i+1}\rangle\langle e_i|+h_{-1}|e_{i+1}\rangle\langle e_{i+1}|\big)+\sum_{i=2}^{N-1}h_1|e_{i+1}\rangle\langle e_{i-1}|\nonumber
\end{align}
Since $\rho_0=\sum_{i,j=1}^Na_{ij}|e_i\rangle\langle e_j|$, first we consider an individual $A_u^Ha_{ij}|e_i\rangle\langle e_j|A_d^H$ term. We get:
\begin{align}\label{app_au_rho_ad}
A_u^Ha_{ij}|e_i\rangle\langle e_j|A_d^H&=a_{ij}\Big(h_0^2|e_{i+1}\rangle\langle e_{j+1}|+h_1^2|e_{i+2}\rangle\langle e_{j+2}|+h_{-1}^2|e_i\rangle\langle e_i|\\
&+h_0h_1(|e_{i+2}\rangle\langle e_{j+1}|+|e_{i+1}\rangle\langle e_{j+2}|)\nonumber\\
&+h_0h_{-1}(|e_i\rangle\langle e_{j+1}|+|e_{i+1}\rangle\langle e_j|)\nonumber\\
&+h_{-1}h_1(|e_i\rangle\langle e_{j+2}|+|e_{i+2}\rangle\langle e_j|)\Big)\nonumber
\end{align}
Collecting the diagonal terms from \ref{app_au_rho_ad}, we get:
\begin{align}\label{app_OQS_diag_1}
A_u^H\rho_0A_d^H&=\sum_{i=1}^{N-2}a_{ii}h_0^2|e_{i+1}\rangle\langle e_{i+1}|+\sum_{i=1}^{N-2}a_{ii}h_1^2|e_{i+2}\rangle\langle e_{i+2}|+\sum_{i=1}^Na_{ii}h_{-1}^2|e_i\rangle\langle e_i|\\
&+\sum_{i=1}^{N-2}(a_{i(i+1)}+a_{(i+1)i})h_0h_1|e_{i+2}\rangle\langle e_{i+2}|\nonumber\\
&+\sum_{i=1}^{N-1}(a_{i(i+1)}+a_{(i+1)i})h_0h_{-1}|e_{i+1}\rangle\langle e_{i+1}|\nonumber\\
&+\sum_{i=1}^{N-2}(a_{i(i+2)}+a_{(i+2)i})h_{-1}h_1|e_{i+2}\rangle\langle e_{i+2}|\nonumber
\end{align}
Similarly, collecting together the diagonal terms from $A_d^Ha_{ij}|e_i\rangle\langle e_j|A_u^H$, we get:
\begin{align}\label{app_OQS_diag_2}
A_d^H\rho_0A_u^H&=\sum_{i=2}^Na_{ii}h_0^2|e_{i-1}\rangle\langle e_{i-1}|+\sum_{i=3}^Na_{ii}h_1^2|e_{i-2}\rangle\langle e_{i-2}|+\sum_{i=1}^Na_{ii}h_{-1}^2|e_i\rangle\langle e_i|\\
&+\sum_{i=2}^{N-1}(a_{i(i+1)}+a_{(i+1)i})h_0h_1|e_{i-1}\rangle\langle e_{i-1}|\nonumber\\
&+\sum_{i=1}^{N-1}(a_{i(i+1)}+a_{(i+1)i})h_0h_{-1}|e_i\rangle\langle e_i|\nonumber\\
&+\sum_{i=1}^{N-2}(a_{i(i+2)}+a_{(i+2)i})h_{-1}h_1|e_{i}\rangle\langle e_{i}|\nonumber
\end{align}
We now consider the individual $A_u^HA_d^Ha_{ij}|e_i\rangle\langle e_j|$, and $A_u^HA_d^Ha_{ij}|e_i\rangle\langle e_j|$ terms.
\begin{align*}
A_u^HA_d^Ha_{ij}|e_i\rangle\langle e_j|&=A_d^HA_u^Ha_{ij}|e_i\rangle\langle e_j|\\
=a_{ij}\Big((h_{-1}^2&+h_0^2+h_1^2)|e_i\rangle\langle e_j|+(h_0h_1+h_{-1}h_0)(|e_{i+1}\rangle\langle e_j|+|e_{i-1}\rangle\langle e_j|)\\
&+h_{-1}h_1(|e_{i+2}\rangle\langle e_j|+|e_{i-2}\rangle\langle e_j|)\Big)
\end{align*}
Again, collecting together the diagonal terms, we get:
\begin{align}\label{app_diag_3}
\frac{1}{2}(A_d^HA_u^H+A_u^HA_d^H)\rho_0&=\sum_{i=1}^Na_{ii}(h_{-1}^2+h_0^2+h_1^2)|e_i\rangle\langle e_i|\\
&+\sum_{i=1}^{N-1}(h_0h_1+h_{-1}h_0)(a_{i(i+1)}|e_{i+1}\rangle\langle e_{i+1}|+a_{(i+1)i}|e_i\rangle\langle e_i|)\nonumber\\
&+\sum_{i=1}^{N-2}h_{-1}h_1(a_{(i+2)i}|e_i\rangle\langle e_i|+a_{i(i+2)}|e_{i+2}\rangle\langle e_{i+2}|)\nonumber
\end{align}
Finally, we consider the individual $a_{ij}|e_i\rangle\langle e_j|A_u^HA_d^H$ terms.
\begin{align*}
a_{ij}|e_i\rangle\langle e_j|A_u^HA_d^H&=a_{ij}|e_i\rangle\langle e_j|A_d^HA_u^H\\
&=a_{ij}\Big((h_{-1}^2+h_0^2+h_1^2)|e_i\rangle\langle e_j|+(h_0h_1+h_{-1}h_0)(|e_i\rangle\langle e_{j+1}|\\
&+|e_i\rangle\langle e_{j-1}|)+h_{-1}h_1(|e_i\rangle\langle e_{j+2}|+|e_i\rangle\langle e_{j-2}|)\Big)
\end{align*}
So that for the diagonal terms we get:
\begin{align}\label{app_diag_4}
\frac{1}{2}\rho_0(A_d^HA_u^H+A_u^HA_d^H)&=\sum_{i=1}^Na_{ii}(h_{-1}^2+h_0^2+h_1^2)|e_i\rangle\langle e_i|\\
&+\sum_{i=1}^{N-1}(h_0h_1+h_{-1}h_0)(a_{(i+1)i}|e_{i+1}\rangle\langle e_{i+1}|+a_{i(i+1)}|e_i\rangle\langle e_i|)\nonumber\\
&+\sum_{i=1}^{N-2}h_{-1}h_1(a_{(i+2)i}|e_{i+2}\rangle\langle e_{i+2}|+a_{i(i+2)}|e_i\rangle\langle e_i|\nonumber
\end{align}
We now feed equations \ref{app_OQS_diag_1}, \ref{app_OQS_diag_2}, \ref{app_diag_3} and \ref{app_diag_4} into equation \ref{star2}. We group the terms together by the coefficients of $h_ih_j$. First note, that the terms in $h_{-1}^2$ cancel to zero, and for $h_0^2$ and $h_1^2$, we get:
\begin{align}\label{app_fin_1}
&\sigma^2\sum_{i=3}^{N-2}h_0^2x_i^2(a_{(i+1)(i+1)}+a_{(i-1)(i-1)}-2a_{ii})\\
&+h_1^2x_i^2(a_{(i+2)(i+2)}+a_{(i-2)(i-2)}-2a_{ii})\nonumber
\end{align}
The terms in $h_{-1}h_0$ and $h_{-1}h_1$ also cancel out, leaving the terms in $h_0h_1$:
\begin{align}\label{app_fin_2}
&\sigma^2\sum_{i=3}^{N-2}h_0h_1x_i^2(a_{(i+1)(i+2)}+a_{(i+2)(i+1)}+a_{(i-1)(i-2)}+a_{(i-2)(i-1)}\\
&-a_{i(i+1)}+a_{(i+1)i}+a_{i(i-1)}+a_{(i-1)i})\nonumber
\end{align}
So that finally we have:
\begin{align}\label{app_final_final}
\frac{\partial(E^{\rho_0}[X^2])}{\partial t}&=\sigma^2\sum_{i=2}^{N-1}x_i^2h_0^2(a_{(i+1)(i+1)}+a_{(i-1)(i-1)}-2a_{ii})\\
&+\sum_{i=3}^{N-2}h_1^2(a_{(i+2)(i+2)}+a_{(i-2)(i-2)}-2a_{ii})\\
&+\sigma^2\sum_{i=3}^{N-2}h_0h_1x_i^2(a_{(i+1)(i+2)}+a_{(i+2)(i+1)}+a_{(i-1)(i-2)}+a_{(i-2)(i-1)}\nonumber\\
&-a_{i(i+1)}+a_{(i+1)i}+a_{i(i-1)}+a_{(i-1)i})\nonumber
\end{align}
\end{proof}
\section*{Acknowledgements}
The author would like to thank Dr Marco Merkli, and Dr Emmanuel Haven for their support and advice.
\section*{Declaration of Conflicting Interests}
The author declares no potential conflicts of interest with respect to the research, authorship, and/or publication of this article.

\end{document}